\newcommand{\Var}{\rm Var}
\newtheorem{lemma}{Lemma}
\newcommand{\eqrevision}[2]{
\text{$\displaystyle #1$} 
}
\begin{document}




\title{Understanding the role of phenotypic switching \\ in cancer drug resistance 
}


\author{Einar Bjarki Gunnarsson$^{1,\ast}$ \and Subhajyoti De$^2$ \and Kevin Leder$^1$ \and Jasmine Foo$^{3,\ast}$}

\date{%
    \footnotesize $^1$Department of Industrial and Systems Engineering, University of Minnesota, Twin Cities, MN 55455, USA. \\[2pt]%
    $^2$Center for Systems and Computational Biology, Rutgers Cancer Institute of New Jersey, NJ 08903, USA. \\[2pt]%
    $^3$School of Mathematics, University of Minnesota, Twin Cities, MN 55455, USA. \\[-2pt]
    $^\ast$Corresponding authors: jyfoo@umn.edu (Jasmine Foo), gunna042@umn.edu (Einar Bjarki Gunnarsson). \vspace*{-6pt}
}

\maketitle

\begin{abstract}
The emergence of acquired drug resistance in cancer represents a major barrier to treatment success.
While research has traditionally focused on genetic sources of resistance, recent findings suggest that cancer cells can acquire transient resistant phenotypes via epigenetic modifications and other non-genetic mechanisms. Although these resistant phenotypes are eventually relinquished by individual cells, they can temporarily 'save' the tumor from extinction and enable the emergence of
more permanent resistance mechanisms.
These observations have generated interest in the potential of epigenetic therapies
for long-term tumor control or eradication. In this work, we develop a mathematical model to study how phenotypic switching at the single-cell level
affects resistance evolution in cancer.
We highlight unique features of non-genetic resistance,
probe the evolutionary consequences of epigenetic drugs and explore potential therapeutic strategies.
We find that even short-term epigenetic modifications and stochastic fluctuations in gene expression can drive long-term drug resistance in the absence of any bona fide resistance mechanisms.
We also find that an epigenetic drug that slightly perturbs the average retention of the resistant phenotype can turn guaranteed treatment failure into guaranteed success.
Lastly, we find that combining an epigenetic drug with an anti-cancer agent can significantly outperform monotherapy, and that treatment outcome is heavily affected by drug sequencing. \\[-3pt]

\noindent {\em Keywords:} Mathematical modeling, cancer drug resistance, evolutionary dynamics, phenotypic switching, epigenetics.  \\[-3pt]

\noindent {\em Licensing:} This article is licensed under a Creative Commons Attribution, NonCommercial, NoDerivatives license (CC BY-NC-ND). 
\end{abstract}



\section{Introduction}
While cancer has traditionally been considered a genetic disease driven by Darwinian evolution at the somatic level,
it is now increasingly recognized that non-genetic sources of phenotypic variation may play an important role in tumor initiation, tumor progression and the evolution of drug resistance \cite{Brock2009, jones2007epigenomics, merlo2006cancer, marusyk2012intra, flavahan2017epigenetic, brown2014poised, wilting2012epigenetic,easwaran2014cancer}.
Common sources of non-genetic heterogeneity
include DNA methylation, histone modifications and other epigenetic mechanisms that alter gene expression, without changing the genetic code,
by controlling DNA accessibility during transcription, replication and repair \cite{woodcock2006chromatin}.
Since these mechanisms 
frequently operate at a significantly
faster rate
than genetic mutations, they can serve as a substrate for natural selection and permanently influence tumor evolution in the absence of any genetic events \cite{Brock2009,jones2007epigenomics,brown2014poised}.
Another common source of variation
in gene expression is
the
inherent stochasticity of intracellular biochemical reactions, which includes transcriptional noise.
This stochasticity may give rise to heritable expression levels,
albeit on the short time scale of one to a few cell divisions, whereas retention of epigenetic modifications has been estimated on the order of
$10$-$10^5$ cell divisions
\cite{niepel2009non, sigal2006variability, cohen2008dynamic}. As we will find, even such short-term phenotypic states can
dramatically impact
the course of
tumor evolution.

Here, we are primarily interested in the role of non-genetic mechanisms in conferring acquired resistance to anti-cancer treatment, as has been explored in several recent experimental works. In Sharma et al.~\cite{sharma2010chromatin}, for example, the authors investigate the acute response of several cancer cell lines\footnote{Non-small cell lung cancer, melanoma, colorectal cancer, breast cancer and gastric cancer.} to targeted anti-cancer agents, and they consistently observe the emergence of a drug-tolerant phenotype (DTP) that is `transiently acquired and relinquished by individual cells within the population at a low frequency'. The authors draw an analogy between DTP's and slowly-proliferating antibiotic-tolerant `persisters' commonly observed in microbial populations \cite{sharma2010chromatin,balaban2004bacterial, dhar2007microbial}, whose survival within a more rapidly proliferating population represents an evolutionary means of 'bet-hedging' against potential environmental stresses \cite{gaal2010exact, muller2013bet, nichol2016stochasticity, jolly2018phenotypic}.
Liau et al.~\cite{liau2017adaptive} and Roesch et al.~\cite{roesch2010temporarily,roesch2013overcoming} similarly describe slow-cycling DTP's in glioblastoma and melanoma, respectively,
and Knoechel et al.~\cite{knoechel2014epigenetic} identify a reversible drug-tolerant state in leukemia which appears to be epigenetically mediated. For even further examples of experimental studies describing (often stem-like) non-genetic phenotypces associated with tumorigenic potential or drug resistance in cancer, we refer to e.g.~\cite{chang2008transcriptome, quintana2010phenotypic, chaffer2011normal, seghers2012successful,sun2014reversible}, as well as recent reviews by Reyes and Lahav \cite{reyes2018leveraging} and Salgia and Kulkarni \cite{salgia2018genetic}. We now turn our attention to such studies that also incorporate a modeling component.

In Gupta et al.~\cite{gupta2011stochastic}, the authors employ a mathematical model of stochastic switching between three cell types to infer the rates at which breast cancer cells transition between a selectively resistant stem-like state and two non-stem-like states. Su et al.~\cite{su2017single} show that phenotypic switching between a drug-sensitive and drug-resistant state in melanoma is well-captured by a similar model, and their study reveals the critical role played by drug-induced adoption of the resistant state, relative to selection of preexisting cells in this state. Goldman et al.~\cite{goldman2015temporally} further provide evidence of chemotherapy-induced switching to a resistant ${\rm CD44}^{\rm Hi}{\rm CD24}^{\rm Hi}$ expression status in breast cancer, and Pisco et al.~\cite{pisco2013non} show that vincristine resistance in leukemia, mediated by overexpression of the multidrug resistance protein 1 (MDR1), is primarily due to
therapy-accelerated adoption of the overexpressed state.
Thus, while reversible drug-tolerant cells may arise naturally in drug-na\"ive cell populations, their emergence can also be directly influenced by anti-cancer treatment. Further complicating the picture, Craig et al.~\cite{craig2019cooperative} have hypothesized that genotypically and phenotypically distinct cells can cooperate to induce the adoption of drug tolerance, although the potential mechanism behind such cooperation remains unclear.

Whereas transiently resistant cells serve the immediate function of protecting the tumor population from extinction, their ultimate role appears to be to set the stage for the evolution of more permanent resistance mechanisms, both of the genetic and epigenetic kind. In Sharma et al.~\cite{sharma2010chromatin}, for example, the authors report that during prolonged exposure of {\em EGFR}-mutant non-small cell lung cancer (PC9) to erlotinib, 
a fraction of drug-tolerant cells become capable of proliferating normally in drug, and that this more aggressive phenotype reverts less readily to sensitivity once removed from drug.
During even more prolonged exposure, Ramirez et al.~\cite{ramirez2016diverse} find that PC9 persister cells give rise to diverse genetic resistance mechanisms, including {\em de novo} adoption of the ${EGFR}^{\rm T790M}$ gatekeeper mutation commonly observed in the clinic (see also Hata et al.~\cite{Engelman2016}). Shaffer et al.~\cite{shaffer2017rare} report findings conceptually similar to Sharma et al.~for melanoma cells treated with vemurafenib,
and they further report that prolonged drug exposure induces epigenetic reprogramming of the drug-tolerant state, ultimately leading to a stable drug-resistant phenotype. 

The above studies indicate that tumor cells in a wide variety of cancer types 
have the ability to adopt reversible drug-resistant phenotypes, which can in turn facilitate
the eventual acquisition of bona fide resistance mechanisms. Such phenotypes can both preexist anti-cancer treatment and be specifically induced or accelerated by therapy.
The relatively fast rate at which non-genetic phenotypes can be adopted, as compared to resistance-conferring mutations, poses a major challenge for clinical strategies.
A deeper understanding of these dynamics is crucial to furthering our understanding of cancer and to informing novel therapeutic efforts.
Here, we develop a mathematical model
to investigate the evolutionary dynamics of a cell population that is able to employ transiently resistant states as a survival strategy.
Our goal is to gain quantitative insights into resistance evolution in this setting, to highlight some of its unique characteristics, and to discuss how an understanding of these dynamics can inform the design of treatment strategies.

\begin{figure}
\centering
\includegraphics[scale=1]{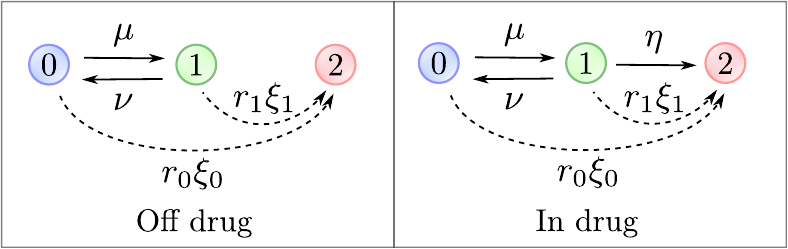} 
\caption{
Graphical representation of the model.
Type-0 (drug-sensitive) cells transition to type-1 (transiently resistant) cells at rate $\mu$ and type-1 cells transition back at rate $\nu$. Each phenotype has distinct growth characteristics, with $r_0$ and $r_1$ denoting the rates of cell division and $d_0$ and $d_1$ denoting the rates of cell death. In the presence of an anti-cancer agent, the transiently resistant type-1 phenotype undergoes epigenetic reprogramming to stable drug resistance at rate $\eta$. Additionally, each cell type can acquire a resistance-conferring mutation at rate $\xi_0$ and $\xi_1$ per cell division, respectively, on and off drug.
Stably resistant (type-2) cells divide at rate $r_2$ and die at rate $d_2$.
}
\label{twotypemodel1} \vspace*{-6pt}
\end{figure}

\vspace*{-6pt}
\section{Model description}

We consider a multi-type branching process model \cite{athreya2004branching}, in which cells switch stochastically between two distinct phenotypes, the drug-sensitive type-0 phenotype and the drug-resistant type-1 phenotype. A type-0 cell divides into two type-0 cells at (Poisson) rate $r_0$, it dies at rate $d_0$ and adopts the type-1 phenotype at rate $\mu>0$, with $\lambda_0:=r_0-d_0$ the net birth rate.\footnote{This means that each type-0 cell waits an exponential amount of time with rate $a_0 := r_0+d_0+\mu$ (i.e.~the waiting time is exponentially distributed with mean $1/a_0$) before it either divides with probability $r_0/a_0$, dies with probability $d_0/a_0$, or adopts the type-1 phenotype with probability $\mu/a_0$.}
A type-1 cell divides at rate $r_1$, dies at rate $d_1$ and reverts to type-0 at rate $\nu>0$, with $\lambda_1:=r_1-d_1$ the net birth rate (Fig \ref{twotypemodel1}).
To capture the drug-sensitivity of type-0 cells and drug-resistance of type-1 cells, we assume that $\lambda_0<0$ and $\lambda_1>0$ in the presence of an anti-cancer agent.
Although this model assumes that switching between phenotypes can occur at any time during the cell cycle, it can easily be adjusted to allow switching to only occur at cell divisions (Appendix \ref{app:alternateversion}).

These general two-type switching dynamics enable description of a variety of sources of phenotypic heterogeneity;
for example, short-term drug-tolerant states conferred by transcriptional noise can be captured by a large reversion rate $\nu$, while longer-term states induced by epigenetic phenomena are captured by a smaller $\nu$.
We also note that the switching rates $\mu$ and $\nu$ are in general distinct, since the mechanism underlying phenotypic switching is in general asymmetric.
A simple example is DNA methylation/demethylation, where {\em de novo} methylation is carried out by the DNA methyltransferases DNMT3a and DNMT3b, while demethylation usually occurs due to a failure of the {\em maintenance} methyltransferase DNMT1 to faithfully preserve methylation patterns during cell division \cite{esteller2008epigenetics}.
In what follows, we will usually refer to the transition from type-0 to type-1 as an {\em epimutation} and the transition back as a {\em reversion}, while keeping more general sources of non-genetic variation (e.g.~transcriptional noise) in mind.

To capture the evolution of more permanent resistance mechanisms, we assume that under anti-cancer treatment, the type-1 phenotype undergoes epigenetic reprogramming to a stably resistant phenotype at rate $\eta$. Alternatively,
type-0 and type-1 cells can acquire a resistance-conferring mutation
at rate $\xi_0$ and $\xi_1$ per cell division, respectively (i.e.~the mutation rate per time unit is $r_0\xi_0$ for type-0 cells and $r_1\xi_1$ for type-1 cells),
both in the presence and absence of the anti-cancer agent (Fig \ref{twotypemodel1}).
Stably resistant (type-2) cells
divide at rate $r_2$ and die at rate $d_2$, with $\lambda_2 := r_2-d_2 > 0$ the net birth rate.
We assume throughout that no type-2 cell is present at detection, focusing on how acquired resistance develops during anti-cancer treatment.

We note that the model outlined above assumes that epigenetic reprogramming from type-1 to type-2 occurs in a single stochastic event.
Single-stage reprogramming is consistent e.g.~with a model of epigenetic gene silencing under recruitment of chromatin regulators suggested in a recent paper by Bintu et al.~\cite{bintu2016dynamics}, and with Brown et al.'s \cite{brown2014poised} notion of 'epigenetically poised' states that evolve to fixed acquired-resistant states via DNA methylation. Conversely, a multi-stage (or more continuous) model appears more consistent e.g.~with the findings of Sharma et al.~\cite{sharma2010chromatin} and Shaffer et al.~\cite{shaffer2017rare} described above, although the data in these works are insufficient to infer an exact model. The distinction between single-stage and multi-stage reprogramming will not be important to much of our investigation, since in Sections \ref{sec:solelyswitching} and \ref{sec:tumorsurvival}, we focus on behavior in the absence of permanent resistance mechanisms ($\eta=0$ and $\xi_0=\xi_1=0$). In Sections \ref{sec:pathway} and \ref{sec:combination}, we work with the minimal single-stage model, as this allows us to gain theoretical insights into the role played by transiently resistant phenotypes in facilitating the evolution of more permanent resistance mechanisms in the simplest possible setting, in terms of a single reprogramming parameter $\eta$. 
Our results in these sections will still be meaningful for the multi-stage case if we interpret the type-2 phenotype more generally as a stabler and more aggressive form of epigenetic resistance, not necessarily representing a permanent acquired-resistant state.
Moreover, once multi-stage models are better understood mechanistically and quantitatively,
our analysis can be easily extended to capture these more complex dynamics, using the same multi-type branching process framework as employed below.

\subsection{Parametrization}
For demonstration purposes, we adopt a baseline parameter regime chosen to mimic {\em in vitro} behavior of {\em EGFR}-mutant non-small cell lung cancer (PC9) reported by Sharma et al.~\cite{sharma2010chromatin} and Hata et al.~\cite{Engelman2016}. We estimate the birth and death rates of all phenotypes and the epimutation rate $\mu$ using these works, but refer to other works for estimation of the reversion rate $\nu$ and the rates $\eta$, $\xi_0$ and $\xi_1$ of stable resistance acquisition. Details are provided in Appendix \ref{app:parametrization}. The exact parameter values are not important to our investigation, but rather the qualitative setting encoded in the regime: Type-0 cells proliferate rapidly in the absence of an anti-cancer agent and die rapidly in its presence, while type-1 cells are able to maintain slow net proliferation under the anti-cancer agent. Epigenetic reprogramming is further assumed to occur at a faster rate than resistance-conferring mutations.
To ensure that our main insights are not particular to our chosen regime, but that they apply more generally across cancer cell populations capable of adopting slow-cycling, transiently resistant phenotypes, we will usually examine a range of possible switching dynamics.  We furthermore discuss the sensitivity of our results to main model parameters, and perform robustness analysis where appropriate.
We finally note that the rapid {\em in vitro} dynamics that underlie Figures \ref{fig:criticalregion_expbehavior} to \ref{fig:interval} can be translated into slower {\em in vivo} dynamics through appropriate rescaling of time, as is discussed in Section \ref{sec:results}.

\section{Mean behavior and survival probability} \label{sec:meanbehavior}

We begin by deriving expressions for the average number of type-0 and type-1 cells alive at any time $t$, assuming the absence of permanent resistance mechanisms (i.e.~$\eta=0$ and $\xi_0=\xi_1=0$).
These expressions will be useful both for analyzing long-term tumor evolution and for estimating the time at which permanent resistance first arises (Section \ref{sec:combination}).

Assume that the tumor consists of $n$ type-0 cells and $m$ type-1 cells at the start of anti-cancer treatment. 
We will both be interested in the case $m=0$, where resistance is mediated by drug-induced adoption of the type-1 phenotype, and $m \gg 0$, where transiently resistant cells preexist treatment (in significant numbers).
The switching dynamics of tumor cells are encoded in the
{\em infinitesimal generator} for the process, \vspace*{-12pt}

\begin{align} \label{eq:infgen}
{\bf A} = \begin{bmatrix} \lambda_0-\mu & \mu \\ \nu & \lambda_1-\nu \end{bmatrix},
\end{align}
where $\lambda_0-\mu$ (resp.~$\lambda_1-\nu$) is the net rate at which a type-0 (type-1) cell produces another type-0 cell, and $\mu$ (resp.~$\nu$) is the transition rate from type-0 to type-1 (type-1 to type-0). 
If we let $\phi_0(t)$ (resp.~$\phi_1(t)$) denote the mean number of type-0 (type-1) cells alive at time $t$, we can calculate these means as \vspace*{-12pt}

    \begin{align} \label{eq:meannumtypes}
[\phi_0(t) \;\; \phi_1(t)]= [n \,\; m] \; \exp({\bf A}t),
\end{align}
which allows use to derive the following expressions:  \vspace*{-12pt}

\begin{align}  \label{eq:meanindividual}
\begin{split}
\phi_0(t)  &= \frac{n\delta + m}{\delta + \gamma} e^{\sigma t} + \frac{n\gamma -m}{\delta+\gamma} e^{\rho t},  \\
\phi_1(t) &= \frac{\gamma(n\delta+m)}{\delta+\gamma} e^{\sigma t} - \frac{\delta(n\gamma-m)}{\delta+\gamma} e^{\rho t}.
\end{split}
\end{align}
Details of the derivation are provided in Appendix \ref{app:generalbranching}.
The rate constants $\sigma$ and $\rho$, with $\sigma>\rho$, are the (real) eigenvalues of the infinitesimal generator ${\bf A}$, given by \begin{align} \label{eq:eigenvaluesmain}
\frac{(\lambda_0 - \mu) + (\lambda_1 -\nu)\pm \sqrt{((\lambda_0 - \mu)-(\lambda_1 -\nu))^2+4\mu\nu}}{2}.
\end{align}
In addition,\vspace*{-12pt}

\begin{align} \label{eq:gamma}
    \gamma := (\sigma-(\lambda_0-\mu))/\nu > 0
\end{align}
is the long-run ratio between type-1 and type-0 cells in the tumor population, and \vspace*{-12pt}

\begin{align}
    \delta := ((\lambda_0-\mu)-\rho)/\nu>0
\end{align}
is the long-run ratio between the size of a clone derived from a single type-0 vs.~a single type-1 cell.
Note that the mean behavior of the process can either be expressed as a function of the fundamental constants $\lambda_0$, $\lambda_1$, $\mu$ and $\nu$, which capture single-cell level dynamics, or as a function of the derived quantities $\gamma$, $\delta$, $\sigma$ and $\rho$, which capture long-run population-level behavior and may be more easily observable in an experimental setting.
Also note that the above expressions only depend on the birth and death rates of each cell type through their net birth rates $\lambda_0=r_0-d_0$ and $\lambda_1=r_1-d_1$.

As expression (\ref{eq:meanindividual}) indicates, the long-run behavior of the tumor population is determined by the sign of the rate constant $\sigma$.
The population survives with positive probability
if and only if $\sigma>0$, in which case it is said to be {\em supercritical}, while extinction is guaranteed for $\sigma < 0$, in which case it is {\em subcritical} (see e.g.~\cite{athreya2004branching} for further information). For a supercritical population, the survival probability can be computed by solving a system of two nonlinear equations, as is outlined in Appendix \ref{app:extinct}.

\section{Results} \label{sec:results}

\subsection{
Resistance driven solely by phenotypic switching} \label{sec:solelyswitching}
We begin by
investigating 
whether
phenotypic switching can drive long-term resistance to continuous
anti-cancer treatment, even in the absence of permanent resistance mechanisms (i.e.~$\eta = 0$ and $\xi_0=\xi_1=0$). 
By examining when $\sigma>0$, we can show that 
tumor survival is possible 
(i.e.~the tumor cell population is supercritical)
if and only if \vspace*{-12pt}

\begin{align} \label{eq:supercriticality} 
\nu \lambda_0 + \mu \lambda_1 > \lambda_0 \lambda_1 
\end{align}
(see Appendix \ref{app:lemma1} for details). By rewriting this condition as (recall that $\lambda_0<0$ under anti-cancer therapy) 
\begin{align} \label{eq:supercriticality1}
\nu/\lambda_1-\mu/|\lambda_0|<1,
\end{align}
we see that the rates $\nu/\lambda_1$ and $\mu/|\lambda_0|$ of phenotypic switching, relative to net growth (or net decay) of each phenotype,
determine whether the tumor can persist under therapy.
It is furthermore easy to see that a sufficient condition for (\ref{eq:supercriticality1}) is 
\begin{align} \label{eq:suffcond}
\nu/\lambda_1 \leq 1 \quad\text{i.e.}\quad 1/\nu \geq 1/\lambda_1,    
\end{align}
which can be interpreted as a simple condition on the time scale of 'phenotypic memory':
The tumor population has a chance of surviving treatment whenever the average memory $1/\nu$ of the resistant state equals or exceeds $1/\lambda_1$.

If we assume that the growth characteristics of type-0 and type-1 cells ($\lambda_0$ and $\lambda_1$) are fixed, the condition \eqref{eq:supercriticality1} for supercriticality can be viewed as describing the region in the $(\mu,\nu)$ plane that lies below the 'critical line' \vspace*{-12pt}

\begin{align} \label{eq:criticalitycurve}
    \nu = \lambda_1/|\lambda_0| \cdot \mu + \lambda_1.
\end{align}
Since $\mu$ and $\nu$ are small, and they may differ by orders of magnitude, it is more instructive to visualize their relationship on a logarithmic scale. In Figure \ref{fig:criticalregion_expbehavior}a, we show this $\log$-scale 'critical curve' for two cases: (i) $\lambda_1 \ll |\lambda_0|$,  i.e.~type-0 cells die rapidly and type-1 cells proliferate slowly under anti-cancer treatment, as in our baseline parameter regime, and (ii) $\lambda_1 \sim |\lambda_0|$, i.e.~the type-1 net birth rate is of the same order as the type-0 net death rate. In both cases, the tumor population is supercritical for $\nu \leq \lambda_1$ (region A), which is our sufficient condition from \eqref{eq:suffcond}, but the population can still be supercritical for $\nu > \lambda_1$ (region B for $\lambda_1 \ll |\lambda_0|$ and regions B+C for $\lambda_1 \sim |\lambda_0|$), the degree to which is controlled by the slope $\lambda_1/|\lambda_0|$ in \eqref{eq:criticalitycurve}.

\begin{figure}[t!]
\centering
\hspace*{-1cm}\includegraphics[scale=1]{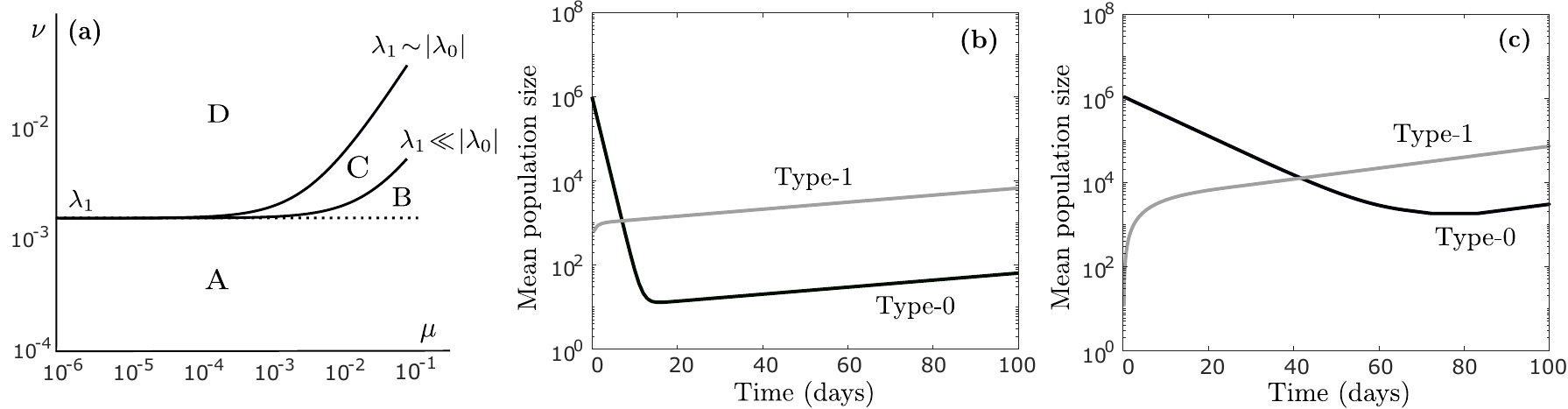} 
\caption{
{\bf (a)} Graphical depiction of the region in the $(\mu,\nu)$ plane where the tumor cell population is supercritical (the region below the line in \eqref{eq:criticalitycurve}), displayed here on a logarithmic scale for two cases ($\lambda_1 \ll |\lambda_0|$ and $\lambda_1 \sim |\lambda_0|$). The two curves are drawn assuming $r_1 = 0.0162$ (per hour) and $d_1=0.015$, and $r_0=0.04$ and $d_0=0.08$ for the $\lambda_1 \ll |\lambda_0|$ case (baseline parameter regime), and $r_0 = 0.004$ and $d_0 = 0.008$ for the $\lambda_1 \sim |\lambda_0|$ case.
Since the slope of \eqref{eq:criticalitycurve} is $\lambda_1/|\lambda_0|>0$, and it intercepts the $\nu$-axis at $\nu = \lambda_1$, the population is supercritical whenever $\nu \leq \lambda_1$, independently of the relationship between $\lambda_0$ and $\lambda_1$ (region A). When phenotypic memory is short ($\nu>\lambda_1$), the population can still be supercritical, the degree to which depends on the slope $\lambda_1/|\lambda_0|$ (region B for $\lambda_1 \ll |\lambda_0|$ and regions B+C for $\lambda_1 \sim |\lambda_0|$).
{\bf (b)} Time-evolution of the expected number (log-scale) of type-0 cells (dark curve) and type-1 cells (light curve) during continuous anti-cancer treatment, assuming no transiently resistant cell is present at detection ($m=0$), calculated using (\ref{eq:meanindividual}).
Parameter values are $r_0 = 0.04$ (per hour), $d_0 = 0.08$, $r_1 = 0.0162$, $d_1 = 0.015$, $\mu = 4 \cdot 10^{-5}$, $\nu = 4 \cdot 10^{-4}$, $\eta = 0$, $\xi_0=\xi_1=0$, $n = 10^6$ and $m=0$. 
{\bf (c)} Same as ({\bf b}), except now $r_0 = 0.004$ (per hour) and $d_0 = 0.008$.
}
\label{fig:criticalregion_expbehavior}
\end{figure}

We conclude 
from the above
that phenotypic switching can in fact drive long-term drug resistance in the absence of more permanent resistance mechanisms,
and we identify two qualitatively distinct
evolutionary pathways to such resistance:
\begin{enumerate}
\item $\nu \leq \lambda_1$: Population survival is driven by 
sufficiently long retention of the resistant phenotype, independently of the rate of epimutation ($\mu$) and type-0 sensitivity to the anti-cancer agent ($\lambda_0$).
\item $\nu > \lambda_1$ and $\mu > \lambda_0(1-\nu/\lambda_1)$: Type-1 cells lose the resistant phenotype too quickly to sustain the tumor by themselves, but this loss is compensated by sufficiently fast adoption of the resistant state.
\end{enumerate}
If we assume $d_1=0$, i.e.~type-1 cells have a stem-like ability to proliferate indefinitely,
the first condition implies that even single-generation phenotypic memory may be sufficient to confer long-term resistance, while the second condition implies that even non-heritable traits, 
e.g.~stochastic variation in gene transcription,
may be able to save the tumor from extinction.

To further elucidate the coevolutionary dynamics between type-0 and type-1 cells, we show in Figure \ref{fig:criticalregion_expbehavior}b the long-term expected behavior of the population in the baseline parameter regime, assuming no transiently resistant cell is present at detection ($m=0$).
When the anti-cancer agent is applied, sensitive type-0 cells initially die at a fast rate, while a small fraction of them adopts the resistant type-1 phenotype. Eventually, the population settles into an equilibrium where back-and-forth epimutations of type-0 cells and reversions of type-1 cells drive an expansion of both subpopulations, albeit at a slow rate.
Once the type-1 population has reached a sufficient size, we can expect it to develop more permanent resistance mechanisms, as is discussed further in  subsequent sections.
In Figure \ref{fig:criticalregion_expbehavior}c, we show long-term expected behavior under an alternative parameter regime, where $|\lambda_0|$ is reduced so that it is of the same order as $\lambda_1$ ($\lambda_1 \sim |\lambda_0|)$. In this regime, type-0 decay is less rapid under anti-cancer treatment, which implies both that the type-1 population builds up more quickly, and that more type-0 cells remain at equilibrium. Note that the equilibrium proportion between type-1 and type-0 cells can in general be computed using expression \eqref{eq:gamma} above.

By examining cell behavior at the individual level, we observe that each type-0 cell is almost guaranteed to go extinct in the baseline regime (it survives with probability 0.005\%), while each type-1 cell survives with 4.9\% probability (Equation \ref{eq:extinctionprob}).
Despite these odds, the anti-cancer agent is unsuccessful in eradicating the drug-sensitive population due to the dynamic switching between phenotypes. In fact, the population as a whole is guaranteed to survive treatment (Equation \ref{eq:tumorsurvival}), which is due to the substantial buffer of type-1 cells that accumulates through type-0 epimutations at the start of treatment and protects the somewhat fragile type-1 population against extinction.
These survival probabilities are generally robust to significant variation in $r_0$, $d_0$, $r_1$ and $d_1$, as we display in Table \ref{table:sensitivity1} in Appendix \ref{app:robustness}. The main exception occurs
when $r_1$ and $d_1$ are changed so that the population becomes subcritical, in which case there is no chance of tumor survival. This transition from guaranteed tumor survival to guaranteed extinction can be quite abrupt, as we explore further in Section \ref{sec:tumorsurvival}. 

The above example allows us to glean two important insights: First of all, the relatively rapid adoption of an epigenetically-mediated resistant phenotype places less restrictions on the robustness of such a phenotype than what is the case for a rare genetic variant arising through mutation.
Thus, even a barely viable non-genetic phenotype may allow the tumor population to escape anti-cancer therapy with 100\% probability, in the absence of any more permanent resistance mechanisms.
Secondly, a tumor population that appears to be static or slow-growing at the population level  may in fact be driven be rapid switching dynamics at the single cell level, and uncovering the exact dynamics may be crucial to understanding how the population responds to treatment, which is the subject of our next section.

\subsection{Tumor survival when switching dynamics are perturbed}
\label{sec:tumorsurvival}
Targeted epigenetic agents, e.g.~inhibitors of DNA methylation and histone deacetylation, have been considered both as a means of reversing the tumorigenic potential of cancer cells and of resensitizing resistant cells to anti-cancer therapy
(see e.g.~\cite{momparler2001potential, flatmark2006radiosensitization, gore2006combined,  juergens2011combination, bhatla2012epigenetic}). In this section, we examine how the probability of tumor survival under continuous anti-cancer treatment depends on the rate of epimutation ($\mu$) and reversion ($\nu$). We then extract insights into the potential benefits of a joint application 
of an anti-cancer agent, aimed at killing the tumor bulk,
and an epigenetic drug, aimed at disrupting the phenotypic switching dynamics.

The probability of survival of the tumor cell population, derived in Appendix \ref{app:extinct}, is shown in Figure \ref{fig:asymmetry}a as a function of $\mu$ and $\nu$.  We consider first the case where no transiently resistant cell preexists treatment ($m=0$) and permanent resistance mechanisms are absent ($\eta=0$ and $\xi_0=\xi_1=0$).
We observe transitions in the dynamics around threshold values of $\mu' \sim 10^{-6}$ and $\nu' \sim 10^{-3}$ per hour. 
The latter threshold reflects a regime change from supercriticality to subcriticality, since the baseline value for the net birth rate $\lambda_1$ is of order $10^{-3}$ per hour (the population is supercritical below the 'critical curve' in Fig \ref{fig:asymmetry}a; see Section \ref{sec:solelyswitching}). When the population is subcritical, the tumor goes extinct with 100\% probability, since the high reversion rate $\nu$ to sensitivity makes it impossible for type-1 cells to expand under treatment. In the supercritical regime, there is always some positive probability that the tumor survives, although this probability will be small for low epimutation rates ($\mu \ll 10^{-6}$). For example, the tumor survival probability corresponding to $\mu = 10^{-10}$ and $\nu = 10^{-4}$ in Figure \ref{fig:asymmetry}a is $0.017\%$, since in this case, epimutations are so infrequent that the type-1 state is unlikely to be adopted by any type-0 cell before the population goes extinct. For high epimutation rates ($\mu \gg 10^{-6}$), however, the type-1 buffer that accumulates at the start of treatment becomes so large that tumor survival is guaranteed whenever the population is supercritical, while extinction is guaranteed whenever the population is subcritical.

The threshold value $\mu' \sim 10^{-6}$ per hour represents the minimal epimutation rate at which tumor survival is certain, given a supercritical population.
In Appendix \ref{app:threshold}, we show that the epimutation rate at which the survival probability is at least $1-u$ ($u \ll 1$) is \vspace*{-12pt}

\begin{align} \label{eq:threshold}
\mu' \approx \frac{|\lambda_0| \log u}{n \log(d_1/r_1)}.
\end{align}
The threshold value $\mu'$ thus depends on the drug-sensitivity $\lambda_0$ of type-0 cells, the robustness of the resistant phenotype ($d_1/r_1$ is the extinction probability of a type-1 clone, assuming no epimutations or reversions, see e.g.~\cite{durrett2015branching}), and the initial population size $n$. 
An order of magnitude change in $\lambda_0$ or $n$ will result in an order of magnitude change in $\mu'$, and in Table \ref{table:sensitivity2} in Appendix \ref{app:robustness}, we show some examples of sensitivity to $r_1$ and $d_1$.
For the baseline parameter regime and $u=0.001$, expression (\ref{eq:threshold}) yields $\mu' \approx 3.59 \cdot 10^{-6}$ per hour, compared to a type-0 birth rate of 
$r_0=0.04$ per hour.
This implies that if the resistant phenotype is adopted once in every 10,000 cell divisions during treatment, the tumor is guaranteed to survive, even if no resistant cell preexists in the population. For larger tumor sizes, e.g.~$n= 10^8$ or $n=10^{10}$ cells, the required adoption rate lowers to once every $10^6$ and once every $10^8$ cell divisions, respectively, which are frequencies typical of resistance-conferring mutations (Appendix \ref{app:parametrization}). Since epigenetic modifications and other non-genetic mechanisms can operate much faster, the above discussion implies that the mere possibility of non-genetically conferred resistance can all but guarantee its emergence,
especially when the tumor is large at detection.

\begin{figure}[t!]
\centering
\hspace*{-1cm} \includegraphics[scale=1] {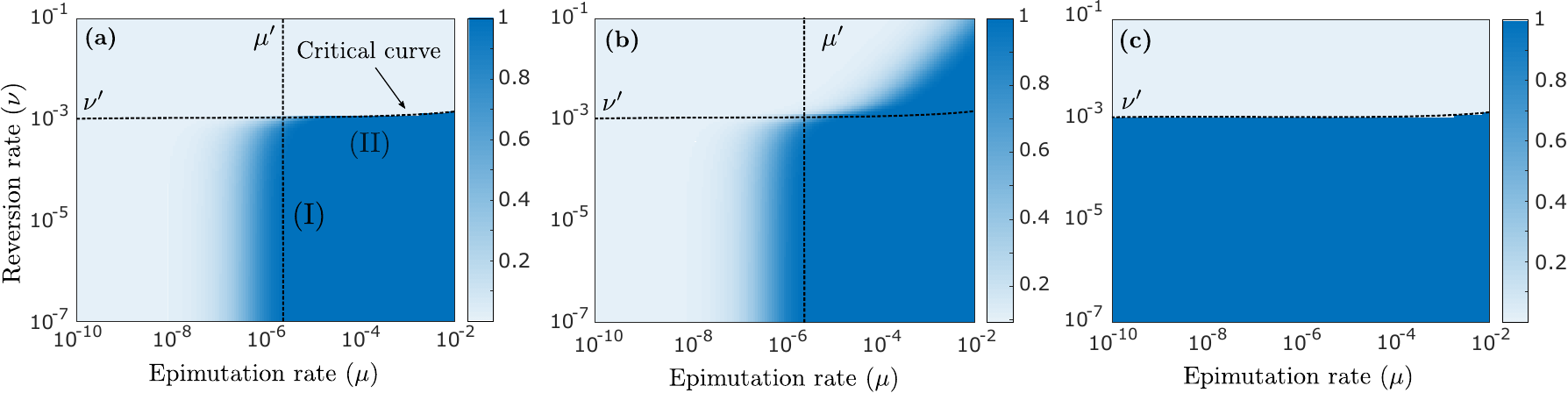}
\caption{
{\bf (a)} Probability that the tumor survives continuous anti-cancer therapy, as a function of $\mu$ and $\nu$, assuming $n=10^6$ and $m=0$ and the absence of permanent resistance mechanisms ($\eta=0$ and $\xi_0=\xi_1=0$), calculated using (\ref{eq:tumorsurvival}). The black dotted curve ('critical curve') separates the regions in the $(\mu,\nu)$ plane where the population is subcritical (top) and supercritical (bottom).
This curve can be extracted from expression \eqref{eq:criticalitycurve} upon logarithmic scaling (see Fig \ref{fig:criticalregion_expbehavior}a). Region (I) in the figure indicates a parameter regime where inhibiting epimutations (decreasing $\mu$) with an epigenetic drug can be an effective treatment strategy, whereas inducing reversions (increasing $\nu$) does little. The reverse is true in the parameter regime of region (II), where a slight perturbation to the reversion rate $\nu$ can guarantee eradication of the tumor cell population.
{\bf (b)} Same as (a), now assuming $\eta=4 \cdot 10^{-7}$ (per hour) and $\xi_0=\xi_1=10^{-7}$ (per cell division), calculated using (\ref{eq:survivalprobperm}).
{\bf (c)} Same as (a), now assuming $n = 10^6 \cdot 0.999$ and $m = 10^6 \cdot 0.001$, calculated using (\ref{eq:tumorsurvival}).
Other parameter values are $r_0 = 0.04$ (per hour), $d_0 = 0.08$, $r_1 = 0.0162$, $d_1 = 0.015$, $\mu = 4 \cdot 10^{-5}$ and $\nu = 4 \cdot 10^{-4}$.
}
\label{fig:asymmetry}
\end{figure}

It is worth noting that the transition from guaranteed tumor survival to guaranteed extinction in Figure \ref{fig:asymmetry}a is much more gradual around the threshold value $\mu' \sim 10^{-6}$ per hour on the $\mu$-axis than around the critical value $\nu' \sim 10^{-3}$ on the $\nu$-axis. This reflects the asymmetric role of type-0 and type-1 cells, and of epimutations and reversions, in the evolutionary dynamics. 
Lowering the epimutation rate $\mu$ will reduce the type-1 buffer that accumulates at the start of treatment, which gradually impairs the collective survival prospects of type-1 cells. On the other hand, since any reversion from type-1 to type-0 effectively amounts to cell death in our setting, increasing the reversion rate $\nu$ will directly affect the survival prospects of individual type-1 cells. As long as $\nu$ is smaller than the critical value, each type-1 cell has some positive probability of persisting therapy, and given a sufficiently large type-1 buffer (i.e.~sufficiently high epimutation rate $\mu$), the tumor as a whole can be guaranteed to survive. Once $\nu$ increases above the critical value, however, each individual type-1 cell becomes certain to go extinct, and the same goes for the tumor as a whole, no matter how large the buffer is.
This explains why for high epimutation rates ($\mu \gg 10^{-6}$), we observe such a sharp transition between guaranteed tumor survival and guaranteed extinction  across the critical curve for $\nu$.

Identifying where a particular cancer cell population falls within the $(\mu,\nu)$ parameter space can yield important insights into the relative attractiveness of targeting $\mu$ and $\nu$ with an epigenetic drug, and the degree to which these parameters should be perturbed.
When $\mu \sim 10^{-6}$ and $\nu \ll 10^{-3}$ per hour, for example (region (I) in Fig \ref{fig:asymmetry}a), inhibiting epimutations (reducing $\mu$) may significantly reduce the tumor survival probability, while inducing reversions (increasing $\nu$) may accomplish little. When $\mu \gg 10^{-6}$ and $\nu \sim 10^{-3}$ per hour, however (region (II) in Fig \ref{fig:asymmetry}a), a slight perturbation to the reversion rate may be the difference between certain tumor survival and certain extinction. This suggests that even if no resistant cell preexists treatment, it may be more effective to revert resistant cells created during the initial stages of therapy than to prevent their emergence.
We also note the importance of identifying the relationship between $\nu$ and $\lambda_1$ for therapeutic considerations. Indeed, recognizing that a small perturbation to the average retention of the resistant phenotype may yield significant treatment benefits can help minimize the risk of any unwanted side effects of the epigenetic treatment.

For the case where permanent resistance mechanisms are assumed ($\eta>0$ and $\xi_0,\xi_1>0$), the probability of tumor survival can be derived by solving a system of nonlinear equations as shown in Appendix \ref{app:stableresistance}.
Figure \ref{fig:asymmetry}b shows  that in this case, the tumor can survive even if the population of type-0 and type-1 cells is subcritical. Indeed, if the epimutation rate $\mu$ is sufficiently high, the large type-1 buffer created at the start of treatment may allow even a subcritical type-1 phenotype to hold off extinction long enough for bona fide resistance to develop.

If we assume that a significant number of resistant cells preexists therapy ($m \gg 0$), the survival probability becomes insensitive to changes in $\mu$ under treatment, since the type-1 buffer needed to save the tumor from extinction will already be present at the onset (Fig \ref{fig:asymmetry}c).
It remains true, however, that a small perturbation to the reversion rate can turn certain therapy failure into certain success.

\begin{figure*}[t!]
\centering
\includegraphics[scale=1]{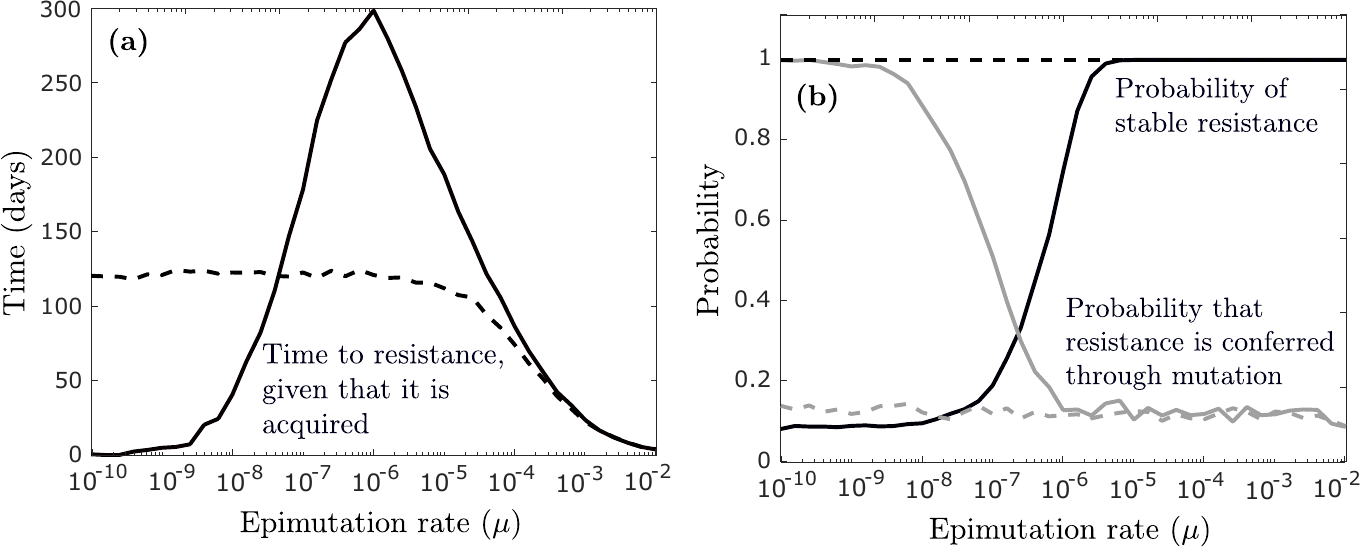}  \vspace*{-6pt}
\caption{
{\bf (a)} Expected time at which the first type-2 cell emerges in the population as a function of the epimutation rate $\mu$, conditioned on the event that a type-2 cell emerges before extinction, first assuming $n=10^6$ and $m=0$ (solid curve), and then $n = 10^6 \cdot 0.999$ and $m = 10^6 \cdot 0.001$ (dashed curve). Produced via simulation by running the process until a type-2 cell emerged on 1000 different occasions and calculating an average.
{\bf (b)} Probability of type-2 emergence (dark curves) and probability that the first type-2 cell arises through mutation (light curves).
Other parameter values are $r_0 = 0.04$ (per hour), $d_0 = 0.08$, $r_1 = 0.0162$, $d_1 = 0.015$, $\nu = 4 \cdot 10^{-4}$, $\eta = 4 \cdot 10^{-8}$ and $\xi_0=\xi_1=10^{-7}$ (per cell division).
}
\label{fig:timetoresistance_mu} \vspace*{-6pt}
\end{figure*}

\subsection{Pathway to stable resistance and rate of acquisition} \label{sec:pathway}
We now examine the mode and speed of stable resistance acquisition ($\eta>0$ and $\xi_0,\xi_1>0$) during continuous anti-cancer treatment. 
In Figure \ref{fig:timetoresistance_mu}a, we show the average time at which the first stably resistant (type-2) cell arises in the population, 
as a function of the epimutation rate $\mu$,
given that a type-2 cell emerges before extinction of the population.
In Figure \ref{fig:timetoresistance_mu}b, we show the probability of the conditioning event, and the probability that stable resistance is conferred through mutation as opposed to epigenetic reprogramming.

We first assume that no transiently resistant (type-1)  cell is present at the onset ($m=0$; solid curves).
We again note a transition in the dynamics around a threshold value of $\mu \sim 10^{-6}$ per hour, and interestingly, 
the average time to resistance is both non-monotonic and highly variable in $\mu$ (Fig \ref{fig:timetoresistance_mu}a; solid curve). When $\mu$ is small ($\mu \ll 10^{-6}$), epimutations are so infrequent that the tumor can only survive via mutation in the type-0 population.
Since type-0 cells decay at a fast rate, any such mutation has to occur early if it is to occur at all.
As $\mu$ increases, the burden of saving the tumor from extinction moves increasingly from type-0 to type-1 cells
(Fig \ref{fig:timetoresistance_mu}b; light solid curve), 
which creates a new pathway for resistance acquisition at a later time.
Once the role of type-0 and type-1 cells in conferring resistance becomes stabilized, however,
the mean acquisition time starts to decrease as $\mu$ increases, since
the type-1 population is created earlier and in greater numbers.
If a significant number of type-1 cells is present at the onset ($m \gg 0$), resistance will be guaranteed to form independently of the epimutation rate under treatment (Fig \ref{fig:timetoresistance_mu}b; dark dashed curve),
and the expected acquisition rate will be insensitive to $\mu$ for small $\mu$ (Fig \ref{fig:timetoresistance_mu}a; dashed curve),
since resistance will most likely develop through
the type-1 population that preexists treatment (Fig \ref{fig:timetoresistance_mu}b; light dashed curve).

In Figure \ref{fig:timetoresistance_nu}a, we consider time to resistance
as a function of the reversion rate $\nu$ instead.
We again observe non-monotonicity and significant variability, with resistance development being slowest around the critical value $\nu \sim 10^{-3}$ per hour. In this case, however, the rate of resistance acquisition is generally insensitive to changes in $\nu$ for 
small $\nu$,
and the transition around the critical value $\nu' \sim 10^{-3}$ is much sharper than around the threshold value $\mu' \sim 10^{-6}$ in Figures \ref{fig:timetoresistance_mu}a and \ref{fig:timetoresistance_mu}b. 
For small $\nu$, stable resistance is mediated primarily through epigenetic reprogramming (Fig \ref{fig:timetoresistance_nu}b; light solid curve), so time to resistance is governed by the size of the type-1 population. Since the net proliferation rate $\lambda_1-\nu$ determines net growth of type-1 cells, the rate of resistance acquisition is not affected by $\nu$ as long as $\nu \ll \lambda_1$. As $\nu$ approaches $\lambda_1$, however, the net proliferation rate $\lambda_1-\nu$ approaches zero, and epigenetic reprogramming slows down considerably. As $\nu$ increases above $\lambda_1$, the population becomes subcritical, and the probability of tumor survival decreases abruptly (Fig \ref{fig:timetoresistance_nu}b; dark solid curve), similarly to what we observed in Section \ref{sec:tumorsurvival}. During this sharp transition, the burden of saving the tumor from extinction moves from the type-1 to the type-0 population (Fig \ref{fig:timetoresistance_nu}b; light solid curve), and since any mutation in the type-0 population must occur early if it is to occur at all, the time to resistance also decreases abruptly as $\nu$ enters the subcritical regime.
Contrary to Figures \ref{fig:timetoresistance_mu}a and \ref{fig:timetoresistance_mu}b, we now observe the same qualitative behavior for the $m=0$ (solid curves) and $m \gg 0$ (dashed curves) cases,
which mirrors our discussion from Section \ref{sec:tumorsurvival}.

The above observations
add an interesting layer to our earlier discussion on the effect of manipulating $\mu$ and $\nu$ on treatment outcome. As an example, when $\mu \gg 10^{-6}$ per hour, inhibiting epimutations (reducing $\mu$) may not do much to prevent acquired resistance,
but it can slow it down considerably.
Moreover, any perturbation to the reversion rate $\nu$ when it is around its critical value may dramatically affect the rate of resistance acquisition.
Note that when $\mu$ is around its threshold value of $\mu \sim 10^{-6}$, resistance will be expected to arise extremely late, while still being guaranteed to develop.
The former is a consequence of the slow-cycling nature of the resistant phenotype, while the latter is a consequence of its rapid adoption under treatment.
This is an important feature of epigenetically-mediated resistance, and this behavior stands in
stark contrast to more robust genetically-resistant phenotypes that are adopted less frequently.

\begin{figure}[t!]
\centering
\includegraphics[scale=1] {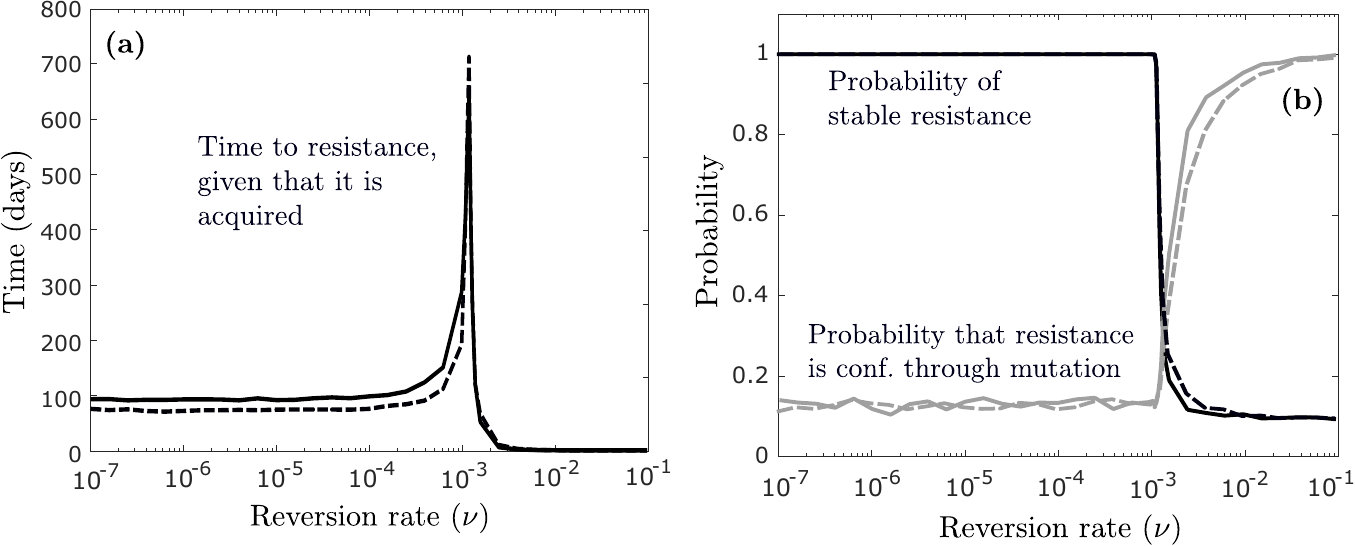} \vspace*{-6pt}
\caption{
{\bf (a)} Expected time at which the first type-2 cell emerges in the population as a function of the reversion rate $\nu$, conditioned on the event that a type-2 cell emerges before extinction, first assuming $n=10^6$ and $m=0$ (solid curve), and then $n = 10^6 \cdot 0.999$ and $m = 10^6 \cdot 0.001$ (dashed curve). Produced via simulation by running the process until a type-2 cell emerged on 1000 different occasions and calculating an average.
{\bf (b)} Probability of type-2 emergence (dark curves) and probability that the first type-2 cell arises through mutation (light curves).
Other parameter values are $r_0 = 0.04$ (per hour), $d_0 = 0.08$, $r_1 = 0.0162$, $d_1 = 0.015$, $\mu = 4 \cdot 10^{-5}$, $\eta = 4 \cdot 10^{-8}$ and $\xi_0=\xi_1=10^{-7}$ (per cell division).
}
\label{fig:timetoresistance_nu}
\end{figure}

\subsection{Evaluation of combination treatment strategies} \label{sec:combination}

To further illustrate how our model can aid medical decision-making, we now examine a combination treatment of an anti-cancer agent and a hypothetical epigenetic drug that directly targets the rates of epimutation ($\mu$) and reversion ($\nu$). Our main questions are whether such a combination is likely to be effective, and whether the epigenetic drug should be applied as pretreatment, posttreatment, or simultaneously with the anti-cancer agent.
To evaluate treatment outcome, we use the following expression for the probability that a successful type-2 cell (i.e.~a type-2 cell that gives rise to a clone that does not go extinct) has emerged by time $t$, \vspace*{-12pt}

\begin{align} \label{eq:timetostableresistance}
1-\exp\left(-\frac{\lambda_2}{r_2} \int_0^t (\xi_0 r_0 \phi_0(s)+\xi_1r_1\phi_1(s) + \eta \phi_1(s)) ds\right)
\end{align}
derived in Appendix \ref{app:stableresistance}. Using (\ref{eq:meanindividual}), we can derive explicit expressions for the integrals in (\ref{eq:timetostableresistance}) as follows: \vspace*{-6pt}
\begin{align}
\begin{split}
\int_0^t \phi_0(s) ds &= -\frac\beta\sigma (1-e^{\sigma t}) - \frac\alpha\rho (1-e^{\rho t}), \\
\eqrevision{\int_0^t \phi_1(s) ds}{\int_1^t \phi_1(s) ds} &= -\frac{\gamma \beta}\sigma (1-e^{\sigma t}) + \frac{\delta\alpha}\rho (1-e^{\rho t}),
\end{split}
\end{align}
assuming $\sigma \neq 0$ and $\rho \neq 0$. These two integrals can be interpreted as the average 'total mass' of type-0 and type-1 cells, respectively, up until time $t$.

We assume that each treatment cycle consists of three two-day blocks and we examine four schedules.
In Schedule A, the epigenetic drug is applied as pretreatment to the anti-cancer agent, whereas in Schedules B, C and D, the anti-cancer agent is applied during the first two blocks of each cycle, and the epigenetic drug is applied during the first, second or third block, respectively (Fig \ref{fig:interval}a).
We show results assuming that the epigenetic drug increases the reversion rate $\nu$ by two orders of magnitude and decreases the epimutation rate $\mu$ to the same extent, while noting that our insights are robust to significant variation in this assumption (Appendix \ref{app:robustness}). We also note that the relatively short duration of each treatment cycle is a function of the rapid {\em in vitro} dynamics of our baseline parameter regime. If each rate constant in the baseline regime is reduced by an order of magnitude, which
may more accurately represent {\em in vivo} dynamics, the results shown below will continue to hold unchanged if we use 60-day treatment cycles instead of 6-day cycles.

\begin{figure}[!t]
\centering
\hspace*{-1cm} \includegraphics[scale=1] {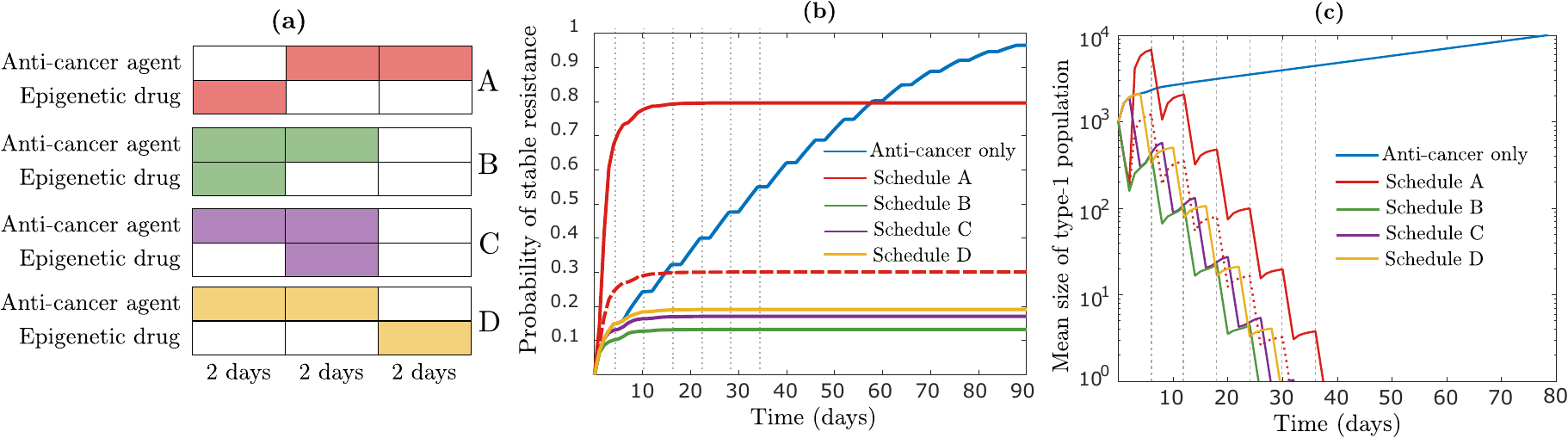}
\caption{
{\bf (a)} Schematic representation of one treatment cycle in the schedules considered. Each schedule is identified by a distinct color which is also applied in (b) and (c).
{\bf (b)} Time-evolution of the probability of successful type-2 emergence, first when the anti-cancer agent is applied alone during the first two blocks of each cycle (blue curve), and then for the four schedules depicted in (a), calculated using (\ref{eq:timetostableresistance}).
Under the anti-cancer agent, $r_0 = 0.04$ (per hour) and $d_0=0.08$, while off it, $r_0 = 0.04$ and $d_0 = 0.0015$. Under the epigenetic drug, $\mu = 4 \cdot 10^{-7}$ and $\nu = 4 \cdot 10^{-2}$, while off it, $\mu = 4 \cdot 10^{-5}$ and $\nu = 4 \cdot 10^{-4}$. The dashed red line shows a modification of Schedule A where $r_0=d_0=0.04$ during the first 2-day block of the first treatment cycle.
{\bf (c)} Time-evolution of the expected number of type-1 cells for the same schedules as considered in (b), calculated using (\ref{eq:meanindividual}).
Parameters not mentioned above are fixed at $r_1 = 0.0162$ (per hour), $d_1 = 0.015$, $\eta = 4 \cdot 10^{-7}$, $\xi_0 = \xi_1= 10^{-7}$ (per cell division), $n = 10^6 \cdot 0.999$ and $m = 10^6 \cdot 0.001$. The vertical broken lines in (b) and (c) signify treatment cycles. 
}
\label{fig:interval}
\end{figure}

Figure \ref{fig:interval}b indicates that whereas the anti-cancer agent alone is guaranteed to fail,
combining it with 
a drug that disturbs the epigenetic switching dynamics 
has the potential to
eradicate the tumor with high probability.
Pretreatment with the epigenetic drug performs the worst by far (Schedule A;  red solid curve), which is due to the fast proliferation of type-0 cells in the absence of the anti-cancer agent, and the relatively fast acquisition of the resistant phenotype in its presence. This both increases the risk of a resistance-conferring mutation during the initial pretreatment phase, and ensures that type-1 cells killed during this phase are rapidly replenished once the anti-cancer agent is applied (Fig \ref{fig:interval}c).
Schedule A continues to perform the worst even if we assume that the tumor population cannot expand beyond its initial size due to spatial constraints, which we model by setting $r_0=d_0=0.04$ per hour during the initial pretreatment phase
(red dashed curves).
This implies that pretreatment with an epigenetic drug may not be judicious when drug-sensitive type-0 cells proliferate too rapidly in the absence of the anti-cancer agent and/or when the resistant type-1 phenotype is adopted too quickly under treatment.
This insight is consistent with experimental findings by Sharma et al.~\cite{sharma2010chromatin},
where the authors observe that simultaneous application of erlotinib and the histone deacetylase inhibitor TSA to PC9 cells eliminates the emergence of resistant clones, whereas a substantial number of such clones arises when extended pretreatment with TSA is followed by exposure to erlotinib.

Focusing now on the other three schedules (B, C and D),
we note that the number of type-1 cells remaining at the end of each treatment cycle is similar for each of them (Fig \ref{fig:interval}c).
Schedule B performs the best, however, since it contains the type-1 population as early as possible in each cycle, which reduces the 'total mass' of type-1 cells in the presence of the anti-cancer agent and slows down epigenetic reprogramming by (\ref{eq:timetostableresistance}).
This finding is robust to significant perturbation of main model parameters, and the relative attractiveness of Schedule B over the other schedules can be much more pronounced than shown here, e.g.~when the rate of epimutation ($\mu$) or epigenetic reprogramming ($\eta$) is higher, or the initial tumor size ($n+m$) is larger (Appendix \ref{app:robustness}).
As an example, when $n= 10^8 \cdot 0.999$ and $m=10^8 \cdot 0.001$, with other parameters unchanged, resistance will be all but guaranteed to form in Schedules A, C and D, while Schedule B is capable of preventing resistance via epigenetic reprogrammming with high probability (Fig \ref{fig:Treatment_appendix1}d).

Our results indicate both that combining an epigenetic drug with an anti-cancer agent can vastly outperform anti-cancer only treatment, and that the epigenetic drug should be applied simultaneously with the anti-cancer agent. These insights are consistent with experimental findings e.g.~by Su et al.~\cite{su2017single}, who find that {\em in vitro} inhibition of the Nf$\kappa$B p65 and MEK/ERK signaling axes in melanoma, in combination with the anti-cancer agent vemurafenib, significantly outperforms monotherapy, by arresting cell-state transitions to a drug-tolerant state.
In Goldman et al.~\cite{goldman2015temporally}, the authors find that applying the SFK inhibitor dasatinib {\em in vivo} on days 8-11 of breast cancer treatment, following two maximum tolerated doses of the anti-cancer agent docetaxel (DTX) on days 2 and 5, is much more effective than DTX-alone treatment. They further show that this treatment is preferable to applying dasatinib on days 2-5 (Schedule I), which shows modest improvement over stand-alone DTX treatment, and applying dasatinib on days 14-17 (Schedule II), which shows no improvement over stand-alone treatment. Although the authors in \cite{goldman2015temporally} refer to Schedule I as a 'simultaneous schedule', the application of dasatinib on days 2-5 should not necessarily be expected to inhibit epimutations following the large DTX dose on day 5.
Application on days 8-11 may be better timed both to revert already drug-tolerant cells to sensitivity and to 'target the induction phase of DTX-induced cell state transition' as the authors put it. Nevertheless, this example serves as an important reminder that actual {\em in vivo} dynamics are likely to be more complex than captured by our simple abstraction.
As an example, if the epimutation rate $\mu$ is influenced by anti-cancer treatment, but this effect is delayed, there may be reason to delay application of the epigenetic drug.
Any substantial presence of drug-tolerant cells that preexist treatment will then create a trade-off between attacking these two temporally distinct sources of resistance.
Indeed, a more complete mechanistic and quantitative understanding of the dynamics of phenotypic switching at the single-cell-level, and how these dynamics are influenced by anti-cancer treatment, will give rise to more complex mathematical questions, as we address further in the discussion section.

\section{Discussion} 
The pervasiveness of acquired drug resistance remains one of the major challenges during clinical management of cancer patients. 
We have established through our evolutionary modeling that
non-heritable stochastic fluctuations in gene expression and short-term epigenetic modifications can 'save' a tumor from extinction
in the absence of any more permanent resistance mechanisms. We have also seen that the potential for rapid adoption of non-genetic resistance implies that such resistance can be all but guaranteed to develop,
even if no resistant cell preexists treatment
and the resistant phenotype is barely viable. 
This suggests ample opportunity for non-genetic mechanisms to induce failure of conventional anti-cancer therapy, and it may help explain 
why so many tumors develop resistance without acquiring mutations in drug targets or activated pathways
\cite{wilting2012epigenetic}.

Throughout, we have highlighted how thinking in terms of our mathematical model can aid medical decision-making. In Section \ref{sec:tumorsurvival}, we saw how a quantitative understanding of the dynamics of epimutation and reversion at the single-cell level can yield important insights into which parameter to attack with an epigenetic drug and the extent to which it needs to be perturbed. As an example, when
the average retention $1/\nu$ of the resistant phenotype is near a critical value of $1/\lambda_1$, a slight perturbation to $\nu$ can turn guaranteed treatment failure into guaranteed success.
In Section \ref{sec:pathway}, we further saw how a quantitative understanding of the underlying switching dynamics, 
and the rate at which any permanent resistance mechanisms are adopted,
can allow us to predict the mode and time scale of resistance acquisition.
We noted that when the resistant type-1 phenotype is slow-cycling, the expected time at which permanent resistance arises in the population can vary greatly depending on the epimutation rate $\mu$, and that when $\mu$ is near a certain threshold value, resistance can be expected to develop extremely late while still being guaranteed to emerge.

In Section \ref{sec:combination}, we finally observed that combining an epigenetic drug with an anti-cancer agent can significantly outperform anti-cancer only treatment, and that the epigenetic drug should be applied simultaneously with the anti-cancer agent rather than as pretreatment or posttreatment. We also noted that pretreatment with the epigenetic drug is not advisable especially when drug-sensitive cells proliferate too rapidly in the absence of the anti-cancer agent, or the resistant phenotype is adopted too quickly in its presence. It should of course be stressed that our model is highly simplified. As an example, we have not assumed any delay in the potential drug-induced adoption of transient resistance under anti-cancer treatment, and we have not assumed any interaction between the anti-cancer agent and the epigenetic drug. Also, whereas we have assumed that the drug-resistant phenotype is continuously slow-cycling, it may respond to drug pressure by entering a quiescent state before resuming proliferation. Finally, our model does not incorporate spatial effects or toxicity constraints, all of which may call for modifications or extensions to the model as our biological understanding accumulates.

Our analysis represents, as far as we know, a first attempt toward a deeper understanding of the evolutionary dynamics of a population that is able to employ transiently resistant cell states to escape anti-cancer therapy. The model presented is flexible in that it enables description of a variety of sources of phenotypic heterogeneity, and the analytical expressions we have derived allow for easy estimation and comparison of treatment outcomes under different regimens. Our investigation is theoretical in nature, however, and the true power of our mathematical model will only be realized through a more complete mechanistic understanding of the dynamics of phenotypic switching at the single-cell-level, and through new methods to infer these dynamics quantitatively. Our hope is that a combined biological and mathematical effort will pave the way toward the design of novel therapeutic strategies, as well as mathematical optimization of already available drug combinations and treatment schedules \cite{chmielecki2011optimization,leder2014mathematical}.

\section*{Acknowledgements}
EBG and KL were supported in part by NSF grant CMMI-1362236. EBG and JF were supported in part by NSF grant DMS-1349724. KL and JF were supported in part by the U.S.-Norway Fulbright Foundation. SD was supported in part by NIH grant R01GM129066.

\section*{Competing Interests}
The authors declare no competing interests.

\bibliographystyle{ieeetr}
\bibliography{epi}











\begin{appendices}

\renewcommand\thefigure{\thesection.\arabic{figure}}
\renewcommand\thetable{\thesection.\arabic{table}}
\renewcommand\theequation{\thesection.\arabic{equation}}
\setcounter{figure}{0} 
\setcounter{equation}{0} 
\setcounter{table}{0} 

\section{Parametrization} \label{app:parametrization}

To parametrize the model, we rely on Sharma et al.'s~\cite{sharma2010chromatin}  investigation of an {\em EGFR}-mutant non-small-cell lung cancer (NSCLC) population (PC9) treated with 2 $\mu$M erlotinib, with fresh drug added every 3 days, and Hata et al.'s~\cite{Engelman2016} study of the evolution of resistance conferred by the $EGFR^{\rm T790M}$ gatekeeper mutation in PC9 cells.

Most of the discussion in the main text concerns behavior in the presence of an anti-cancer agent, so we start with those dynamics. 
Following Hata et al.~\cite{Engelman2016}, we assume that drug-tolerant type-1 cells give birth at rate $r_1 = 0.0162$ per hour and die at rate $d_1 = 0.015$ per hour, and that stably resistant type-2 cells give birth at rate $r_2 = 0.04$ and die at rate $d_2 = 0.0015$.
Although these rates apply to treatment with 300 nM gefitinib as opposed to 2 $\mu$M erlotinib, viability curves presented in \cite{sharma2010chromatin} indicate that it is reasonable to assume similar rates for the case of 2 $\mu$M erlotinib treatment.

In Sharma et al.~\cite{sharma2010chromatin}, the authors report that under 9 days of continuous anti-cancer treatment, almost all cells in the population die, while persister cells corresponding to around $0.27\%  \pm 0.21\%$ of the original population survive. Assuming exponential decay of the drug-sensitive type-0 population, we estimate a net birth rate of $\lambda_0=-0.04$ per hour for type-0 cells and set $r_0 = 0.04$ and $d_0 = 0.08$, where we assume without loss of generality that the anti-cancer agent increases the death rate of type-0 cells without affecting their birth rate (the birth rate of type-0 cells in the absence of the anti-cancer agent is $r_0=0.04$ as is discussed below). Note that if we assume instead that the drug decreases the birth rate of type-0 cells, with the net birth rate $\lambda_0 = -0.04$ unchanged, the type-0 population will decay at the same net rate as before, but the mutation rate $r_0\xi_0$ of type-0 cells will decrease proportionally to the decrease in $r_0$. We can therefore view variation in the effect of the drug on the birth rate $r_0$ as equivalent to variation in the mutation rate $\xi_0$.

The rates $\mu$ of epimutation and $\nu$ of reversion are more difficult to estimate, since it is not clear in \cite{sharma2010chromatin} how many of the type-1 persister cells that survive the first 9 days of treatment are present at the onset. However, the authors do conclude from experiments applying the histone deacetylase inhibitor TSA  as pretreatment to erlotinib that type-1 cells do emerge {\em de novo} during treatment.
For our baseline parameter regime, we will assume that around half the cells that survive the first 9 days of treatment are present at the onset and set $\mu = 4 \cdot 10^{-5}$ per hour accordingly, which implies that an epimutation occurs once in every 1,000 divisions of a type-0 cell.
In determining $\nu$, 
we assume that cells transition more freely out of the resistant state than into it and set $\nu = 4 \cdot 10^{-4}$, which is consistent with observed phenotypic switching dynamics between persister cells and normal cells in {\em Escherichia coli} bacterial populations \cite{norman2015stochastic}, and between stem-like and non-stem-like cells in breast cancer \cite{gupta2011stochastic}. 

The point mutation rate per nucleotide per cell division has been estimated as $5 \cdot 10^{-10}$ \cite{durrett2015branching}. To obtain the rate of mutations that confer resistance to anti-cancer therapy, this number needs to be multiplied by the number of resistance-conferring point mutations. Following e.g.~\cite{Engelman2016} and \cite{bozic2013evolutionary}, we assume a baseline rate of $\xi_0 = \xi_1= 10^{-7}$ per cell division, although a reasonable range can be anywhere from $10^{-9}$ to $10^{-5}$, which is the range suggested by \cite{durrett2015branching} for mutations leading to cancer.
This assumption on $\xi_0$ and $\xi_1$ translates into a mutation rate of $\eqrevision{r_0\xi_0}{r_0\xi} = 4 \cdot 10^{-9}$ per hour for type-0 cells and $\eqrevision{r_1\xi_1}{r_1\xi} = 1.62 \cdot 10^{-9}$ per hour for type-1 cells.

\begin{table}[t]
\centering
\begin{tabular}{| m{1.5cm} | m{3cm} | m{3cm} |}
\hline
&Presence of drug \\
\hline
$r_0$&0.04 \\
\hline
$d_0$&0.08 \\
\hline
$r_1$&0.0162 \\
\hline
$d_1$&0.015 \\
\hline
$\mu$&$4 \cdot 10^{-5}$ \\
\hline
$\nu$&$4 \cdot 10^{-4}$ \\
\hline
$\eta$ & $4 \cdot 10^{-7}$ or $4 \cdot 10^{-8}$ \\
\hline
$\xi_0=\xi_1$ & $10^{-7}$\\
\hline
$r_2$ & $0.04$ \\
\hline
$d_2$ & $0.0015$ \\
\hline
\end{tabular}
\caption{Baseline parameter regime in the presence of an anti-cancer agent. All rates are measured per hour, except the mutation rates $\xi_0$ and $\xi_1$, which are measured per cell division.}
\label{parameters1}
\end{table}

As for the rate $\eta$ of epigenetic reprogramming, we note that measurements provided by Bintu et al.~\cite{bintu2016dynamics} of the dynamics of epigenetic silencing under recruitment of chromatin regulators suggest that 
it may be natural to assume that $\eta$ is around 1-2 orders of magnitude lower than $\mu$.  We will apply $\eta= 4 \cdot 10^{-7}$ or $\eta = 4 \cdot 10^{-8}$ as our baseline rate depending on the context, which is 2-3 orders of magnitude smaller than $\mu$, and 1-2 orders of magnitude larger than the mutation rates $r_0\xi_0$ and $r_1\xi_1$. As is the case for the mutation rate, one can expect significant variation in the epigenetic reprogramming rate depending on the cancer type, the anti-cancer agent being applied and the concentration of this agent.
We also note that 
`stable epigenetic resistance' may not even be a well-defined concept
if epigenetic reprogramming occurs in a multi-stage or continuous fashion, with each stage leading to an increasingly stable phenotype, which is either less likely to revert back to sensitivity once removed from drug or does so on a longer time scale.

In the main text, we show all results assuming an initial population size of $10^6$ cells, following the {\em in vitro} experiments conducted by Sharma et al. We note that $10^6$ refers to the {\em effective population size}, i.e.~those tumor cells that are capable of undergoing phenotypic switching,
which may only apply to a subset of tumor cells~\cite{michor2006evolution}.
We also note that the initial population size should be viewed in context of the mutation rates  $\eqrevision{r_0\xi_0}{r_0\xi} = 4 \cdot 10^{-9}$ and $\eqrevision{r_1\xi_1}{r_1\xi} = 1.62 \cdot 10^{-9}$ and the reprogramming rates $\eta = 4 \cdot 10^{-7}$ or $\eta = 4 \cdot 10^{-8}$, 
since the relationship between these three parameters largely determines the dynamics of resistance acquisition.

The only section in the main text where tumor dynamics in the absence of an anti-cancer agent are considered is Section \ref{sec:combination}. Since most of the data referenced above concerns behavior in the presence of drug, we will mostly assume that the dynamics on and off drug are identical, with the important exception of the birth and death rate of type-0 cells, which we assume to be $r_0 = 0.04$ and $d_0=0.0015$ off drug, following Hata et al.~\cite{Engelman2016}.
The resulting parameter values (Table \ref{parameters2}) therefore reflect the qualitative setting where type-0 cells proliferate rapidly in the absence of drug but die rapidly in its presence, and type-1 cells proliferate slowly both on and off drug. The assumption that type-1 cells are at a selective disadvantage off drug is not essential to our results, as is discussed in Appendix \ref{app:robustness} below.

\begin{table}[h]
\centering
\begin{tabular}{| m{1.5cm} | m{3cm} |}
\hline
&Absence of drug\\
\hline
$r_0$& 0.04 \\
\hline
$d_0$&0.0015 \\
\hline
$r_1$&0.0162 \\
\hline
$d_1$& 0.015 \\
\hline
$\mu$&$4 \cdot 10^{-5}$ \\
\hline
$\nu$& $4 \cdot 10^{-4}$ \\
\hline
$\eta$ & 0 \\
\hline
$\xi_0=\xi_1$ & $10^{-7}$ \\
\hline
$r_2$ & 0.04 \\
\hline
$d_2$ & 0.0015 \\
\hline
\end{tabular}
\caption{Baseline parameter regime in the absence of an anti-cancer agent. All rates are measured per hour, except the mutation rates $\xi_0$ and $\xi_1$, which are measured per cell division.} 
\label{parameters2}
\end{table}

We finally note that according to the above dynamics, type-1 cells will constitute around 0.11\% of the population during long-term expansion in the absence of drug and at the start of treatment. In the main text, we generally consider both the case where no type-1 cell is present at the start of anti-cancer treatment ($m=0$), and the case where type-1 cells constitute 0.1\% of the population at the onset ($m \gg 0$).

\renewcommand\thefigure{\thesection.\arabic{figure}}
\renewcommand\thetable{\thesection.\arabic{table}}
\renewcommand\theequation{\thesection.\arabic{equation}}
\setcounter{figure}{0} 
\setcounter{equation}{0} 
\setcounter{table}{0} 

\section{Two-type Markovian branching processes} \label{app:generalbranching}
Here, we derive expression (\ref{eq:meanindividual}) in the main text
and describe some of its properties.
For the case $\eta=0$ and $\xi_0=\xi_1=0$, our model reduces to a two-type continuous-time Markovian branching process~\cite{athreya2004branching}. Let $X(t) = (X_0(t), X_1(t))$ denote such a process, where the two types are designated as type-0 and type-1. Associated with $X(t)$ is the {\em mean matrix} ${\bf M}(t) = \{m_{ij}(t): (i,j)\in \{0,1\}^2 \}$, defined by  
\[
m_{ij}(t)=E[X_j(t)|X(0)= {\bf e}_i],\quad (i,j)\in \{0,1\}^2,
\]
where ${\bf e}_i$ denotes the unit vector with 1 in the $i$-th coordinate, and the {\em infinitesimal generator} ${\bf A} = \{a_{ij}: (i,j) \in \{0,1\}^2\}$, which satisfies   \vspace*{-12pt}

\begin{align*} 
\mathbf{M}(t)=\exp(\mathbf{A}t), \quad t \geq 0.
\end{align*}
We can interpret $m_{ij}(t)$ as the mean number of type-$j$ particles alive at time $t$, given that the process is started by a single type-$i$ cell, and $a_{ij}$ as the infinitesimal rate at which a type-$i$ cell produces a type-$j$ cell. We assume that $a_{12}, a_{21}>0$, i.e.~the rates of switching between types are strictly positive. For further discussion on the above matrices, see \cite{athreya2004branching}.

Let $\phi_0^{(n,m)}(t)$ and  $\phi_1^{(n,m)}(t)$ denote the mean number of type-0's and type-1's alive at time $t$, assuming initial conditions $(X_0(0),X_1(0)) = (n,m)$, where $n$ and $m$ are nonnegative integers with $n+m>0$.
Using the mean matrix, we can easily compute these means as \vspace*{-12pt}

\begin{align} \label{eq:meannumtypes1}
[\phi_0^{(n,m)}(t) \;\; \phi_1^{(n,m)}(t)]= [n \,\; m] \; {\bf M}(t).
\end{align}
Note that the infinitesimal generator ${\bf A}$ possesses distinct real eigenvalues $\rho < \sigma$ given by \vspace*{-12pt}

\begin{align} \label{eq:eigenvalues}
\frac{a_{11}+a_{22} \pm \sqrt{(a_{11}-a_{22})^2+4a_{12}a_{21}}}2,
\end{align}
which follows from our assumption that $a_{12}>0$ and $a_{21}>0$.
The eigenvalues of ${\bf M}(t) = \exp({\bf A}t)$ are then easily obtained as $e^{\rho t}$ and $e^{\sigma t}$. If we define $\delta := (a_{11}-\rho)/a_{21}$ and $\gamma := (\sigma-a_{11})/a_{21}$,
it is easily established that $\mathbf{v} = [1 \;\,\delta]$ and $\mathbf{w} = [1\;\, \gamma]$
are left eigenvectors of ${\bf A}$ with respect to $\rho$ and $\sigma$, respectively. Decomposing $[n \;\, m]$ in terms of $\mathbf{v}$ and $\mathbf{w}$, we can then compute the means in (\ref{eq:meannumtypes1}) explicitly as \vspace*{-12pt}

\begin{align}
\begin{split} \label{eq:meanexpressions}
\phi_0^{(n,m)}(t)  &= \frac{n\delta + m}{\delta + \gamma} e^{\sigma t} + \frac{n\gamma -m}{\delta+\gamma} e^{\rho t},  \\
\phi_1^{(n,m)}(t) &= \frac{\gamma(n\delta+m)}{\delta+\gamma} e^{\sigma t} - \frac{\delta(n\gamma-m)}{\delta+\gamma} e^{\rho t}.
\end{split}
\end{align}
If we define $\alpha := (n\gamma-m)/(\delta+\gamma)$ and $\beta := (n\delta+m)/(\delta+\gamma)$, we can simplify these expressions further to \vspace*{-12pt}

\begin{align}
\begin{split} \label{eq:meanexpressions1}
\phi_0^{(n,m)}(t)  &= \beta e^{\sigma t} + \alpha e^{\rho t},  \\
\phi_1^{(n,m)}(t) &= \gamma \beta e^{\sigma t} - \delta \alpha e^{\rho t}.
\end{split}
\end{align}
Note that although we do not show it explicitly, the constants $\alpha$ and $\beta$ depend on the initial conditions $(n,m)$, whereas $\gamma$, $\delta$, $\sigma$ and $\rho$ are completely determined by the infinitesimal generator ${\bf A}$.

We next establish that that $a_{11}-\rho >0$ and $\sigma-a_{11} >0$, which will imply that $\gamma>0$, $\delta>0$ and $\beta>0$ in (\ref{eq:meanexpressions}) and (\ref{eq:meanexpressions1}). By rewriting (\ref{eq:meanexpressions}) as \vspace*{-12pt}

\begin{align*}
\phi_0^{(n,m)}(t)  &= \frac{n}{\delta+\gamma} (\delta e^{\sigma t}+\gamma e^{\rho t}) + \frac{m}{\delta+\gamma}(e^{\sigma t} - e^{\rho t}), \\
\phi_1^{(n,m)}(t) &= \frac{n\delta \gamma}{\delta+\gamma}(e^{\sigma t}-e^{\rho t}) + \frac{m}{\delta+\gamma} (\gamma e^{\sigma t}+ \delta e^{\rho t}),
\end{align*}
and noting that $\rho < \sigma$, it will also follow that the expected number of type-0's and type-1's is strictly positive for all $t>0$, i.e.~$X(t)$ is a positive-regular process \cite{athreya2004branching}.

To show that $a_{11}-\rho >0$ and $\sigma-a_{11} >0$, we first note that since $\rho$ is an eigenvalue for ${\bf A}$, we have $(a_{11}-\rho)(a_{22}-\rho) = a_{12}a_{21}$. By our assumption that $a_{12}, a_{21}>0$, the terms $a_{11}-\rho$ and $a_{22}-\rho$ must then be nonzero and have the same sign. Using (\ref{eq:eigenvalues}), we get since $a_{12},a_{21}>0$: \vspace*{-12pt}

\begin{align*}
\rho &= \frac{a_{11}+a_{22} - \sqrt{(a_{11}-a_{22})^2+4a_{12}a_{21}}}2 \\
&< \frac{a_{11}+a_{22} - |a_{11}-a_{22}|}2 \\
&= \begin{cases} a_{11} \text{  if  $a_{11} \leq a_{22}$} \\ a_{22} \text{  if  $a_{11} > a_{22}$.} \end{cases}
\end{align*}
This implies that $a_{11}-\rho>0$ or $a_{22}-\rho>0$ depending on the relationship between $a_{11}$ and $a_{22}$. Since $a_{11}-\rho$ and $a_{22}-\rho$ have the same sign, one being positive implies that the other one is as well, so in both cases, we have $a_{11}-\rho>0$ and $a_{22}-\rho>0$. A similar argument confirms that $\sigma-a_{11} > 0$.

Now let \vspace*{-12pt}

\[
\phi^{(n,m)}(t) := \phi_0^{(n,m)}(t)+\phi_1^{(n,m)}(t)
\]
denote the expected total number of cells (type-0 and type-1) alive at time $t$, starting from $n$ type-0 cells and $m$ type-1 cells.  By (\ref{eq:meanexpressions1}), we can write \vspace*{-12pt}

\begin{align} \label{eq:asymptotics0}
\phi^{(n,m)}(t) = \beta(1+\gamma) e^{\sigma t} + \alpha(1-\delta) e^{\rho t}.
\end{align}
We then obtain the following expressions for the expected size of a clone, at time $t$, started by a single cell of each type: \vspace*{-12pt}

\begin{align} \label{eq:asymptotics00}
\begin{split}
\phi^{(1,0)}(t) &= \frac{\delta(1+\gamma)}{\delta+\gamma} e^{\sigma t} + \frac{\gamma(1-\delta)}{\delta+\gamma} e^{\rho t}, \\
\phi^{(0,1)}(t) &= \frac{1+\gamma}{\delta+\gamma} e^{\sigma t} - \frac{1-\delta}{\delta+\gamma} e^{\rho t}.
\end{split}
\end{align}
We conclude by discussing large-$t$ asymptotics. Since $\sigma > \rho$, the long-run behavior will be dominated by the former term in (\ref{eq:meanexpressions}), (\ref{eq:meanexpressions1}), (\ref{eq:asymptotics0}) and (\ref{eq:asymptotics00}). In particular, we can write \vspace*{-12pt}

\begin{align} \label{eq:asymptotics1}
\begin{split}
&\phi_0^{(n,m)}(t) = \beta e^{\sigma t} + o(e^{\sigma t}), \\
&\phi_1^{(n,m)}(t) = \gamma \beta e^{\sigma t} + o(e^{\sigma t}),
\end{split}
\end{align}
and \vspace*{-12pt}

\begin{align} \label{eq:asymptotics2}
\begin{split}
&\phi^{(1,0)}(t) = \frac{\delta(1+\gamma)}{\delta+\gamma} e^{\sigma t}+ o(e^{\sigma t}), \\
&\phi^{(0,1)}(t) = \frac{1+\gamma}{\delta+\gamma} e^{\sigma t}+ o(e^{\sigma t}),
\end{split}
\end{align}
where $o(e^{\sigma t})$ denotes a function that satisifies $o(e^{\sigma t})/e^{\sigma t} \to 0$ as $t \to \infty$.
Note that from (\ref{eq:asymptotics1}), it is clear that $\gamma$ is the long-run ratio between type-1's and type-0's in the population, and from (\ref{eq:asymptotics2}), we see that $\delta$ is the long-run ratio between mean clone size started by a single type-0 cell vs.~a single type-1 cell. This gives an intuitive interpretation of $\delta$ and $\gamma$ in (\ref{eq:meanexpressions}), (\ref{eq:meanexpressions1}), (\ref{eq:asymptotics0}) and (\ref{eq:asymptotics00}), while $\alpha$ and $\beta$ depend on $n$, $m$, $\delta$ and $\gamma$ through $\alpha = (n\gamma-m)/(\delta+\gamma)$ and $\beta = (n\delta+m)/(\delta+\gamma)$.

\renewcommand\thefigure{\thesection.\arabic{figure}}
\renewcommand\thetable{\thesection.\arabic{table}}
\renewcommand\theequation{\thesection.\arabic{equation}}
\setcounter{figure}{0} 
\setcounter{equation}{0} 
\setcounter{table}{0}

\section{Extinction probabilities} \label{app:extinct}

We continue to assume the absence of permanent resistance mechanisms ($\eta=0$ and $\xi_0=\xi_1=0$), which implies that our model still reduces to a two-type branching process $(X_0(t),X_1(t))$.
Define \vspace*{-12pt}

\begin{align*}
& p_0 := P(X_0(t) +X_1(t) = 0 \text{ for some $t$}  | X_0(0)=1, X_1(0) = 0), \\
& p_1 := P(X_0(t) +X_1(t) = 0 \text{ for some $t$}  | X_0(0)=0, X_1(0) = 1),
\end{align*}
 as the extinction probabilities of clones started by a single cell of each type.
Note that if we start the process with a single type-0 cell, the initial event in the process will be a (i) cell division with probability $r_0/(r_0+d_0+\mu)$, (ii) cell death with probability $d_0/(r_0+d_0+\mu)$, and a switch to type-1 with probability $\mu/(r_0+d_0+\mu)$. If the initial event is a cell division, an additional type-0 cell will be created and the process goes extinct if and only if both cells go extinct, which occurs with probability $p_0^2$. If the initial event is a switch between types, the new type-1 cell will go extinct with probability $p_1$. Therefore, by conditioning on whether the initial event is a division, death or switch between types, we can derive the following conditions for $p_0$ and $p_1$: \vspace*{-12pt}
 
 \begin{align*} 
& p_0 = \frac{r_0}{r_0+d_0+\mu} \cdot p_0^2 + \frac{d_0}{r_0+d_0+\mu} \cdot 1+ \frac\mu{r_0+d_0+\mu} \cdot p_1,
\end{align*}
which yields \vspace*{-12pt}

 \begin{align*} 
& (r_0+d_0+\mu) \cdot p_0 = r_0 p_0^2 + d_0 + \mu p_1.
\end{align*}
By the same argument for a process started with a single type-1 cell, we obtain the following nonlinear system for $p_0$ and $p_1$: \vspace*{-12pt}

 \begin{align} \label{eq:extinctionprob}
 \begin{split}
& (r_0+d_0+\mu) \cdot p_0 = r_0 p_0^2 + d_0 + \mu p_1, \\
& (r_1+d_1+\nu) \cdot p_1 = r_1 p_1^2 + d_1 + \nu p_0.
\end{split}
\end{align}
When $\sigma>0$ and the process is supercritical, we have $p_0<1$ and $p_1<1$ which are uniquely determined by these two equations, while when $\sigma < 0$, we have $p_0=1$ and $p_1=1$ \cite{athreya2004branching}.

Note that the classification into supercritical and subcritical is independent of the initial conditions $(X_0(0),X_1(0)) = (n,m)$ (a change in initial conditions does not affect the extinction probability of a single-cell derived clone). The initial conditions do however affect the survival probability of the population as a whole. Indeed, the survival probability for a population starting with $n$ type-0 cells and $m$ type-1 cells is \vspace*{-12pt}

\begin{align} \label{eq:tumorsurvival}
1-p_0^np_1^m\eqrevision{,}{.}
\end{align}
since the population goes extinct if and only if each individual type-0 cell and each individual type-1 cell goes extinct, which occurs with probability $p_0^np_1^m$ by independence.

\renewcommand\thefigure{\thesection.\arabic{figure}}
\renewcommand\thetable{\thesection.\arabic{table}}
\renewcommand\theequation{\thesection.\arabic{equation}}
\setcounter{figure}{0} 
\setcounter{equation}{0} 
\setcounter{table}{0} 

\section{Conditions for supercriticality} \label{app:lemma1}

Here, we derive condition (\ref{eq:supercriticality}) in the main text. We again let $(X_0(t),X_1(t))$ denote the stochastic process corresponding to the model in the absence of permanent resistance mechanisms ($\eta=0$ and $\xi_0=\xi_1=0$).

\begin{lemma} \label{lemma:criticality}
Assume $\lambda_0<0$, $\lambda_1>0$, $\mu>0$ and $\nu>0$.
\begin{enumerate}[(1)]
\item $(X_0(t), X_1(t))$ is subcritical if $\nu \lambda_0 + \mu \lambda_1 < \lambda_0 \lambda_1$, critical if $\nu \lambda_0 + \mu \lambda_1 = \lambda_0 \lambda_1$ and supercritical if $\nu \lambda_0 + \mu \lambda_1 > \lambda_0 \lambda_1$.
\item A sufficient condition for supercriticality is $\lambda_1 \geq \nu$.
\end{enumerate}
\end{lemma}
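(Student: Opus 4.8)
The plan is to translate the entire statement into a question about the sign of the dominant eigenvalue $\sigma$ of the generator $\mathbf{A}$ in (\ref{eq:infgen}), since the main text already records that the process is supercritical, critical or subcritical according to whether $\sigma>0$, $\sigma=0$ or $\sigma<0$. First I would note that the two eigenvalues in (\ref{eq:eigenvaluesmain}) are distinct and real, because the discriminant $((\lambda_0-\mu)-(\lambda_1-\nu))^2+4\mu\nu$ is strictly positive under the standing assumption $\mu,\nu>0$; hence $\rho<\sigma$ with no degeneracy. The key algebraic observation is that the product and sum of the eigenvalues are the determinant and trace of $\mathbf{A}$, namely
\begin{align*}
\sigma\rho = \det\mathbf{A} = \lambda_0\lambda_1 - \nu\lambda_0 - \mu\lambda_1, \qquad \sigma+\rho = \mathrm{tr}\,\mathbf{A} = \lambda_0+\lambda_1-\mu-\nu,
\end{align*}
so that the trichotomy in the hypothesis of part (1) is \emph{exactly} the trichotomy $\det\mathbf{A}<0$ (supercritical), $\det\mathbf{A}=0$ (critical), $\det\mathbf{A}>0$ (subcritical).

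I would then read off the sign of $\sigma$ from these two symmetric functions. When $\det\mathbf{A}<0$ the eigenvalues have opposite signs, and since $\sigma$ is the larger one we immediately get $\sigma>0$, giving supercriticality. The remaining two cases ($\det\mathbf{A}\ge 0$) both require the sign of the trace, so the crux of the argument is the claim that $\det\mathbf{A}\ge 0$ forces $\mathrm{tr}\,\mathbf{A}<0$. To establish this I would start from $\det\mathbf{A}\ge 0$, i.e.\ $\lambda_1(\lambda_0-\mu)\ge \nu\lambda_0$, and divide by $\lambda_0-\mu<0$ (which holds because $\lambda_0<0$ and $\mu>0$), flipping the inequality to obtain $\lambda_1 \le \nu\lambda_0/(\lambda_0-\mu) = \nu|\lambda_0|/(|\lambda_0|+\mu) < \nu$. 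Feeding $\lambda_1<\nu$ back into the trace gives $\mathrm{tr}\,\mathbf{A} = (\lambda_1-\nu)+(\lambda_0-\mu) < \lambda_0-\mu < 0$, as desired.

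With this in hand the two boundary and interior cases follow cleanly. If $\det\mathbf{A}=0$, one eigenvalue is $0$ and the other equals $\mathrm{tr}\,\mathbf{A}<0$; since $\sigma$ is the larger, $\sigma=0$ and the process is critical. If $\det\mathbf{A}>0$ the two eigenvalues share a sign, and since their sum $\mathrm{tr}\,\mathbf{A}$ is negative they are both negative, so $\sigma<0$ and the process is subcritical. This completes part (1). For part (2) I would simply observe that $\lambda_1\ge\nu$ yields $\nu|\lambda_0|/(|\lambda_0|+\mu) < \nu \le \lambda_1$, which after multiplying by $\lambda_0-\mu<0$ is precisely the strict inequality $\lambda_1(\lambda_0-\mu) < \nu\lambda_0$ equivalent to $\det\mathbf{A}<0$, placing us in the supercritical case of part (1). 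Equivalently, one can invoke the rewritten condition (\ref{eq:supercriticality1}): $\lambda_1\ge\nu$ gives $\nu/\lambda_1\le 1$, whence $\nu/\lambda_1-\mu/|\lambda_0|<1$.

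The main obstacle I anticipate is the sign-of-trace step in the middle paragraph. The trichotomy itself is a one-line determinant computation, but distinguishing critical from subcritical when $\det\mathbf{A}\ge 0$ is not automatic from the determinant alone; it genuinely uses the standing sign conventions $\lambda_0<0$, $\lambda_1>0$, $\mu,\nu>0$ to pin down $\mathrm{tr}\,\mathbf{A}<0$, and the division by the negative quantity $\lambda_0-\mu$ must be handled with care so the inequalities flip correctly.
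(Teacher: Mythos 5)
Your proof is correct and follows essentially the same route as the paper: both reduce the classification to the sign of the dominant eigenvalue $\sigma$, and both hinge on the same redundancy argument that the determinant condition subsumes the trace condition under the standing sign assumptions (your step $\det\mathbf{A}\geq 0 \Rightarrow \lambda_1<\nu \Rightarrow \mathrm{tr}\,\mathbf{A}<0$ is exactly the implication the paper proves to discard its extra hypothesis $\lambda_0+\lambda_1<\mu+\nu$). The only difference is cosmetic: you read off the sign of $\sigma$ from the symmetric functions $\sigma\rho=\det\mathbf{A}$ and $\sigma+\rho=\mathrm{tr}\,\mathbf{A}$, whereas the paper works directly from the radical formula (\ref{eq:eigenvaluesmain}).
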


\begin{proof}
It is trivial to check, by determining the sign of the larger eigenvalue $\sigma$ in (\ref{eq:eigenvaluesmain}) in the main text, that
\begin{enumerate}[(i)]
\item $(X_0(t), X_1(t))$ is subcritical if $\lambda_0+\lambda_1 < \mu+\nu$ and $\nu \lambda_0 + \mu \lambda_1 < \lambda_0 \lambda_1$. 
\item $(X_0(t), X_1(t))$ is critical if $\lambda_0+\lambda_1 < \mu+\nu$ and $\nu \lambda_0 + \mu \lambda_1 = \lambda_0 \lambda_1$. 
\item $(X_0(t), X_1(t))$ is supercritical if $\lambda_0+\lambda_1 \geq \mu+\nu$ or $\nu \lambda_0 + \mu \lambda_1 > \lambda_0 \lambda_1$. 
\end{enumerate}
What remains to show is that the first condition is redundant in each case, and that (iii) is satisfied when $\lambda_1 \geq \nu$.

Note first that if $\lambda_1 \geq \nu$, we have 
\[
\nu \lambda_0 + \mu \lambda_1 - \lambda_0 \lambda_1 = \lambda_0(\nu-\lambda_1) + \mu \lambda_1 \geq \mu \lambda_1  > 0.
\]
since $\lambda_0 < 0$, $\mu > 0$ and $\lambda_1 > 0$, from which supercriticality follows. This establishes (2). Then note that since $\lambda_1 \geq \nu$ implies $\nu \lambda_0 + \mu \lambda_1 > \lambda_0 \lambda_1$, and we have $\nu-\lambda_0 > 0$, we get 
\[
\nu \lambda_0 + \mu \lambda_1 \leq  \lambda_0 \lambda_1 \;\;\Rightarrow\;\; \lambda_1 < \nu  \;\;\Rightarrow\;\;  \lambda_1 < \mu+\nu-\lambda_0,
\]
which shows that the first condition in (i) and (ii) above is implied by the second condition. Using what we have proven, we additionally have
\[
\lambda_0+\lambda_1 \geq \mu+\nu \;\;\Rightarrow\;\; \lambda_1 \geq \nu \;\;\Rightarrow\;\; \nu \lambda_0 + \mu \lambda_1 - \lambda_0 \lambda_1  > 0,
\]
showing that the second condition in (iii) above contains the first one.
\end{proof}

\renewcommand\thefigure{\thesection.\arabic{figure}}
\renewcommand\thetable{\thesection.\arabic{table}}
\renewcommand\theequation{\thesection.\arabic{equation}}
\setcounter{figure}{0} 
\setcounter{equation}{0} 
\setcounter{table}{0} 

\section{Alternate version of the model} \label{app:alternateversion}

In our baseline model, we assume that epimutations and reversions can occur at any time during the cell cycle. If we instead want a model where these events only occur at cell divisions, we proceed as follows, assuming again the absence of permanent resistance mechanisms ($\eta=0$ and $\xi_0=\xi_1=0$).

Let $u$ denote the probability of epimutation at each division of a type-0 cell and $v$ denote the probability of reversion at each division of a type-1 cell. A type-0 cell then divides into two type-0 cells at rate $r_0(1-u)$ and it produces one type-0 and one type-1 cell at rate $r_0u$. Similarly, a type-1 cell divides into two type-1 cells at rate $r_1(1-v)$ and it produces one type-0 and one type-1 cell at rate $r_1v$. These dynamics can be captured by the infinitesimal generator \vspace*{-12pt}

\begin{align*}
{\bf A} = \begin{bmatrix} r_0(1-u)-d_0 & r_0u \\ r_1v & r_1(1-v)-d_1 \end{bmatrix} = \begin{bmatrix} \lambda_0-r_0u & r_0u \\ r_1v & \lambda_1-r_1v \end{bmatrix}.
\end{align*}
Note that this infinitesimal generator has the same form as the generator (\ref{eq:infgen}) for our baseline model if we define $\mu := r_0u$ and $\nu := r_1v$. This implies that the mean behavior of this new model is identical to the mean behavior of our original model, and all expressions of Appendix \ref{app:generalbranching} apply by making the above substitution. 
The models are distinct, however, on a sample-path basis, which is reflected e.g.~in distinct extinction probabiltities. These probabilities are now determined by the equations   \vspace*{-12pt}

 \begin{align*}
 \begin{split}
& (r_0+d_0) \cdot p_0 = r_0(1-u) p_0^2 + d_0 + r_0u p_0 p_1, \\
& (r_1+d_1) \cdot p_1 = r_1(1-v) p_1^2 + d_1+ r_1v p_0 p_1 ,
\end{split}
\end{align*}
which in general yield different solutions to (\ref{eq:extinctionprob}).

\renewcommand\thefigure{\thesection.\arabic{figure}}
\renewcommand\thetable{\thesection.\arabic{table}}
\renewcommand\theequation{\thesection.\arabic{equation}}
\setcounter{figure}{0} 
\setcounter{equation}{0} 
\setcounter{table}{0} 

\section{
Threshold value for $\mu$
}
\label{app:threshold}

In the main text, we examine the survival probability of a tumor under continuous anti-cancer therapy,
assuming no transiently resistant cell is present at the start of treatment ($m=0$). We observe a threshold value $\mu'$ which is the minimal epimutation rate that guarantees long-term survival of the resistant population. Here, we are interested in deriving an expression for this threshold value.

Let $Y_0(t)$ denote the number of type-0 cells still alive $t$ time units into anti-cancer treatment, and let $Y_1(t)$ denote the number of type-1 cells that have been produced through type-0 epimutations up until time $t$. Since we are only interested in what happens during the initial stages of treatment, and the type-1 population is assumed to be small compared to the initial type-0 population, we ignore reversions from type-1 to type-0 and thus assume that the type-0 population decays at an exponential rate $|\lambda_0|$, with initial population size $Y_0(0)= n$.

Note that given $Y_0(s)$ for $0 \leq s \leq t$, the type-0 population produces a type-1 cell at rate $\mu Y_0(s)$ at time $s$. Therefore, \vspace*{-12pt}

\begin{align*}
E[Y_1(t) | Y_0(s), s \leq t] = \int_0^t \mu Y_0(s)ds,
\end{align*}
from which we conclude \vspace*{-12pt}

\begin{align*}
E[Y_1(t)] &= \int_0^t \mu n e^{\eqrevision{-|\lambda_0|}{\lambda_0}s}ds = \frac{n\mu}{|\lambda_0|} (1-e^{\eqrevision{-|\lambda_0|}{\lambda_0} t}).
\end{align*}
Note that $E[Y_1(t)] \to {n\mu}/|\lambda_0|$ as $t \to \infty$, so approximately \vspace*{-12pt}

\begin{align} \label{eq:epimutationsstart}
n\mu/|\lambda_0|
\end{align}
type-1 cells are created during the initial stages of treatment, during which most of the type-0 population is eradicated. 

Now let $u \ll 1$. The number of type-1 cells, $N_1$, needed to guarantee that the resistant population survives with probability $1-u$ satisfies  \vspace*{-12pt}

\begin{align*}
1-(d_1/r_1)^{N_1} \geq 1-u,
\end{align*}
where we use that $d_1/r_1$ is the extinction probability of a type-1 clone, assuming no reversions to type-0 \cite{durrett2015branching}.
We therefore conclude that \vspace*{-12pt}

\begin{align} \label{eq:N1condition}
N_1 \geq \frac{\log u}{\log(d_1/r_1)}
\end{align}
(recall that $r_1>d_1$ by assumption).
To determine the threshold epimutation rate $\mu'$ that guarantees survival probability of at least $1-u$ under treatment, we combine (\ref{eq:epimutationsstart}) and (\ref{eq:N1condition}) to get \vspace*{-12pt}

\begin{align} \label{eq:threshold1}
\mu' \approx \frac{|\lambda_0| \log u}{n\log(d_1/r_1)}.
\end{align}

\renewcommand\thefigure{\thesection.\arabic{figure}}
\renewcommand\thetable{\thesection.\arabic{table}}
\renewcommand\theequation{\thesection.\arabic{equation}}
\setcounter{figure}{0} 
\setcounter{equation}{0} 
\setcounter{table}{0} 
\section{Time to stable resistance} \label{app:stableresistance}

We are now interested in analyzing the time at which the first successful type-2 cell emerges in the population during continuous anti-cancer therapy.
Let $Z_0(s)$ and $Z_1(s)$ denote the number of type-0 and type-1 cells alive at time $s$, now assuming the presence of permanent resistance mechanisms ($\eta>0$ and $\xi_0,\xi_1>0$). Assume the initial conditions $(Z_0(0),Z_1(0)) = (n,m)$
and that no permanently resistant (type-2) cell is present at the onset.
Let $\eta^{\rm eff}$ be the `effective' reprogramming rate, i.e.~the rate at which epigenetic reprogramming produces a successful type-2 cell (a type-2 cell that gives rise to a clone that does not go extinct), and let $\xi^{\rm eff}_0$ and $\xi^{\rm eff}_1$ be the `effective' mutation rates. Then $\eta^{\rm eff} = \eta\lambda_2/r_2$, $\xi^{\rm eff}_0 = \xi_0\lambda_2/r_2$ and $\xi^{\rm eff}_1 = \xi_1\lambda_2/r_2$, since type-2 cells form a single-type binary branching process with extinction probability $d_2/r_2 = 1-\lambda_2/r_2$
\cite{durrett2015branching}. 

Let $\tau$ denote the time of first occurrence of a successful type-2 cell.
Note that at any time $s$, the total rate at which cells acquire a resistance-conferring mutation is
$r_0\xi_0^{\rm eff} Z_0(s)+r_1\xi_1^{\rm eff}Z_1(s)$,
and the total rate of epigenetic reprogramming is $\eta^{\rm eff} Z_1(s)$. If we condition on the history of the type-0 and type-1 population up until some time $t$, the number of type-2 cells that have emerged by time $t$ is then Poisson distributed with mean  \vspace*{-12pt}

\begin{align*}
\int_0^t (\xi^{\rm eff}_0 r_0 Z_0(s) + \xi^{\rm eff}_1 r_1 Z_1(s) + \eta^{\rm eff} Z_1(s)) ds,
\end{align*}
which implies that \vspace*{-12pt}

\begin{align*}
&P(\tau>t \,|\, (Z_0(s),Z_1(s))_{s \leq t}) \\
=\;& \exp\left(-\int_0^t (\xi^{\rm eff}_0 r_0 Z_0(s) + \xi^{\rm eff}_1 r_1 Z_1(s) + \eta^{\rm eff} Z_1(s)) ds\right),
\end{align*}
since $\{\tau>t\}$ is the event that no type-2 cell has emerged by time $t$.
Taking expectations, we arrive at \vspace*{-12pt}

\begin{align*}
    & P(\tau>t) \\
    =\;& E \left[\exp\left(-\int_0^t (\xi^{\rm eff}_0 r_0 Z_0(s) + \xi^{\rm eff}_1 r_1 Z_1(s) + \eta^{\rm eff} Z_1(s)) ds\right)\right].
\end{align*}
We now argue that $Z_0(s)$ and $Z_1(s)$ in the exponent can be well-approximated by their means. Assuming the initial condition $(Z_0(0),Z_1(0))=(n,m)$, we can write \vspace*{-12pt}

\begin{align*}
(Z_0(s),Z_1(s))=\sum_{j=1}^n\left(Z_{0,j}^{(0)}(s),Z_{1,j}^{(0)}(s)\right) +\sum_{j=1}^m\left(Z_{0,j}^{(1)}(s),Z_{1,j}^{(1)}(s)\right),
\end{align*}
where $Z_{i,j}^{(k)}(t)$ represents the number of type-$i$ cells present at time $t$ descended from a single type-$k$ cell and $\{Z_{i,j}^{(k)}(t): j\in \{1,2,\ldots\}\}$ is an i.i.d.~sequence of such random variables.
Given sufficiently large $n$ and $m$, we know by the law of large numbers that the approximation $(Z_0(s),Z_1(s)) \approx (\phi_0(s),\phi_1(s))$ should be justified. To examine fluctuations around the mean, we note that by the central limit theorem, \vspace*{-12pt}

\begin{align*}
\sum_{j=1}^n\left(Z_{0,j}^{(0)}(s),Z_{1,j}^{(0)}(s)\right)&\approx n\left(E[Z_{0,1}^{(0)}(s)],E[Z_{1,1}^{(0)}(s)]\right)\\
&+\sqrt{n}\left(W_1\sqrt{\Var\big(Z_{0,1}^{(0)}(s)\big)},W_2\sqrt{\Var\big(Z_{1,1}^{(0)}(s)\big)}\right),
\end{align*}
where $(W_1,W_2)$ is a mean-zero bivariate Gaussian. By equation (21) of Section V.7 of \cite{athreya2004branching} we know that there exists a $C>0$ such that \vspace*{-12pt}

\begin{align*}
\sqrt{\Var\big(Z_{i,1}^{(0)}(s)\big)}\leq CE[Z_{i,1}^{(0)}(s)], \quad s >0.
\end{align*}
We can therefore write \vspace*{-12pt}

\begin{align*}
\sum_{j=1}^n\left(Z_{0,j}^{(0)}(s),Z_{1,j}^{(0)}(s)\right)&\approx n\left(E[Z_{0,1}^{(0)}(s)],E[Z_{1,1}^{(0)}(s)]\right) (1+C/\sqrt{n}).
\end{align*}
A similar result holds for the descendants of the type-1 cells. Ignoring the second-order term, we arrive at \vspace*{-12pt}

\begin{align} \label{eq:timetostableresistanceapp}
    & P(\tau>t) \nonumber \\
    \approx\;& \exp\left(-\int_0^t (\xi^{\rm eff}_0 r_0 \phi_0(s) + \xi^{\rm eff}_1 r_1 \phi_1(s) + \eta^{\rm eff} \phi_1(s)) ds\right).
\end{align}
To test the quality of this approximation, we show in Figure \ref{fig:qualityofapproximation} a comparison between \eqref{eq:timetostableresistanceapp} and simulation results for as few as $n=10^4$ starting cells. To calculate the mean number of type-0 and type-1 cells alive at time $s$, $\phi_0(s)$ and $\phi_1(s)$ in \eqref{eq:timetostableresistanceapp}, we can apply  (\ref{eq:meanindividual}) in the main text. Technically,
the birth rates $r_0$ and $r_1$ should be replaced by $r_0(1-\xi_0^{\rm eff})$ and $r_1(1-\xi_1^{\rm eff})$, and the death rate $d_1$ should be replaced by $d_1+\eta^{\rm eff}$, to reflect the introduction of permanent resistance mechanisms,
but assuming $\xi_0,\xi_1 \ll 1$ and $\eta \ll d_1$, we can safely ignore this minor complication.
Using   (\ref{eq:meanindividual}), we can therefore calculate \vspace*{-12pt}

\begin{align} \label{eq:timetostableresistanceapp1}
\begin{split}
\int_0^t \phi_0(s) ds &= -\frac\beta\sigma (1-e^{\sigma t}) - \frac\alpha\rho (1-e^{\rho t}), \\
\int_0^t \phi_1(s) ds &= -\frac{\gamma \beta}\sigma (1-e^{\sigma t}) + \frac{\delta\alpha}\rho (1-e^{\rho t}),
\end{split}
\end{align}
which gives an explicit expression for  (\ref{eq:timetostableresistanceapp}) assuming $\sigma \neq 0$ and $\rho \neq 0$. Note that the expressions in (\ref{eq:timetostableresistanceapp1}) can be interpreted as the `total mass' of type-0 and type-1 cells, respectively, up until time $t$.

\begin{figure}[t!]
\includegraphics[scale=1]{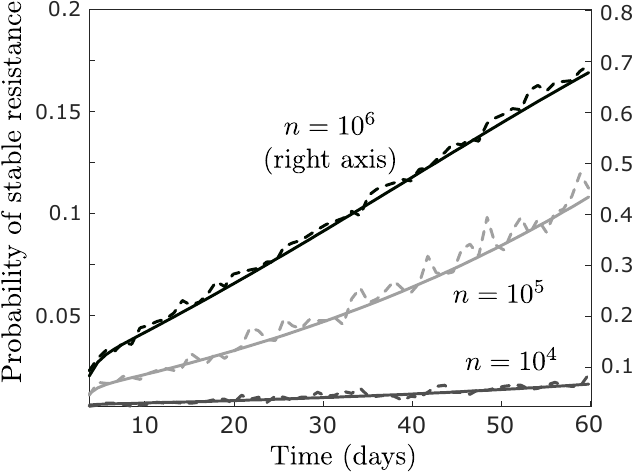}
\centering
\caption{
Comparison of expression \eqref{eq:timetostableresistanceapp} (whole curves) with simulation results (dotted curves) in the baseline parameter regime, assuming a starting population of $n=10^4$ (bottom curve; left axis), $n=10^5$ (middle curve; left axis) or $n=10^6$ cells (top curve; right axis). Simulation results are based on 2000 runs of the process. Other parameter values are $r_0 = 0.04$ (per hour), $d_0 = 0.08$, $r_1 = 0.0162$, $d_1 = 0.015$, $\mu = 4 \cdot 10^{-5}$, $\nu = 4 \cdot 10^{-4}$, $\eta = 4 \cdot 10^{-7}$, $\xi_0 = \xi_1 = 10^{-7}$ (per cell division), and $m=0$.
}
\label{fig:qualityofapproximation}
\end{figure}

In the main text, we apply (\ref{eq:timetostableresistanceapp}) and (\ref{eq:timetostableresistanceapp1}) to evaluate interval treatment strategies, where the anti-cancer agent is intermittently applied and removed. This can create a problem, since our calculations of $\eta^{\rm eff}$, $\xi_0^{\rm eff}$ and $\xi_1^{\rm eff}$ assume that the rates $r_2$ and $d_2$ of birth and death of type-2 cells do not change over time. However, since we make the assumption in our baseline parameter regime that $r_2$ and $d_2$ are identical on and off the anti-cancer agent, we do not have to make any adjustments to  (\ref{eq:timetostableresistanceapp}) to use it to evaluate interval treatment strategies. In fact, since we assume $r_2=0.04$ and $d_2=0.0015$ with $\lambda_2/d_2 \approx 1$, we can effectively take $\eta^{\rm eff} = \eta$, $\xi_0^{\rm eff} = \xi_0$ and $\xi_1^{\rm eff} = \xi_1$.

To determine whether stable resistance develops at all during continuous anti-cancer therapy, we consider the event $\{\tau=\infty\}$. It is possible to estimate its probability by taking $t \to \infty$ in (\ref{eq:timetostableresistanceapp}), but then we first have to condition on whether the overall population of type-0 and type-1 cells goes extinct or not, since $\phi_0(s)$ and $\phi_1(s)$ blow up as $s \to \infty$ for any supercritical process, irrespective of the likelihood that the population survives.
However, it is more straightforward to proceed as follows. Let $p_0$, $p_1$ and $p_2$ denote the extinction probabilities of clones started by a single cell of each type. By conditioning on the initial event, we derive the following conditions for these probabilities: \vspace*{-12pt}

 \begin{align}  \label{eq:extinctionprobperm}
 \begin{split}
& (r_0+d_0+\mu) \cdot p_0 = r_0(1-\eqrevision{\xi_0}{\xi}) p_0^2 + d_0 + \mu p_1 + r_0\eqrevision{\xi_0}{\xi} p_0 p_2, \\
& (r_1+d_1+\nu+\eta) \cdot p_1 = r_1 (1-\eqrevision{\xi_1}{\xi}) p_1^2 + d_1 + \nu p_0 + r_1 \eqrevision{\xi_1}{\xi} p_1 p_2 + \eta p_2,\\
& p_2 = d_2/r_2,
\end{split}
\end{align}
where we use that fact that type-2 cells form a single-type binary branching process with extinction probability $d_2/r_2$. Note that these equations reduce to (\ref{eq:extinctionprob})
if we set $\xi_0=\xi_1=0$ and $\eta=0$.
We can then calculate
\begin{align}
P(\tau=\infty) = p_0^n p_1^m,
\end{align}
which gives the probability that stable resistance does not develop at all during continuous administration of anti-cancer treatment. The probability that stable resistance does develop is then
\begin{align} \label{eq:survivalprobperm}
P(\tau<\infty) = 1-p_0^n p_1^m.
\end{align}

\renewcommand\thefigure{\thesection.\arabic{figure}}
\renewcommand\thetable{\thesection.\arabic{table}}
\renewcommand\theequation{\thesection.\arabic{equation}}
\setcounter{figure}{0} 
\setcounter{equation}{0} 
\setcounter{table}{0} 
\section{Robustness analysis} \label{app:robustness}

\subsection*{Survival probability}

Table \ref{table:sensitivity1} shows how survival probability, both at the indidividual cell level and for the tumor population as a whole, responds to changes in $r_0$, $d_0$, $r_1$ and $d_1$, pertinent to the discussion of Section \ref{sec:solelyswitching}.
\begin{table}[t]
    \centering
    \begin{tabular}{| m{0.55cm}  m{0.55cm}  m{0.7cm}  m{1cm} | m{1.05cm}  m{1.05cm}  m{1.5cm} |}
    \hline
         $r_0$&$d_0$&$r_1$&$d_1$&$1-p_0$&$1-p_1$&$1-p_0^np_1^m$  \\
         \hline
         0.04&0.08&0.0162&0.015&0.0049\%&4.94\%&100\% \\
         \hline
         +10\%&&&&0.0055\%&4.94\%&100\% \\
&+10\%&&&0.0041\%&4.94\%&100\% \\
+10\%&+10\%&&&0.0045\%&4.94\%&100\% \\
\hline
&&$+10\%$&&0.0136\%&13.58\%&100\% \\
&&&$+10\%$&0\%&0\%&0\% \\
&&$+10\%$&$+10\%$&0.0052\%&5.16\%&100\% \\
\hline
+50\%&&&&0.0099\%&4.94\%&100\% \\
&+50\%&&&0.0025\%&4.94\%&100\% \\
+50\%&+50\%&&&0.0033\%&4.94\%&100\% \\
\hline
&&$+50\%$&&0.0366\%&36.63\%&100\% \\
&&&$+50\%$&0\%&0\%&0\% \\
&&$+50\%$&$+50\%$&0.0058\%&5.76\%&100\% \\
\hline
    \end{tabular}
    \caption{Response of the survival probability of (i) a clone started by a single type-0 cell ($1-p_0$), (ii) a clone started by a single type-1 cell ($1-p_1$), and (iii) the overall population of $n=10^6$ type-0 and $m=0$ type-1 cells ($1-p_0^np_1^m$), to changes in $r_0$, $r_1$, $d_0$ and $d_1$. The probabilities $p_0$ and $p_1$ can be calculated using \eqref{eq:extinctionprob}. In the top row, we show the baseline parameter values for $r_0$, $d_0$, $r_1$ and $d_1$, and the survival probabilities given these values. In the remaining lines, we adjust one or two parameters at a time as indicated and display survival probabilities given these changes. All rates are measured per hour.}
    \label{table:sensitivity1}
\end{table}

\subsection*{Threshold value for $\mu$}

Table \ref{table:sensitivity2} shows how the threshold value for $\mu$, denoted as $\mu'$ and given by \eqref{eq:threshold}, responds to to changes in $r_1$ and $d_1$, pertinent to the discussion of Section \ref{sec:tumorsurvival}.
\begin{table}[t]
    \centering
    \begin{tabular}{| m{0.8cm}  m{1cm} | m{2cm}  |}
    \hline
         $r_1$&$d_1$&$\mu'$  \\
         \hline
        0.0162&0.015&$3.59 \cdot 10^{-6}$  \\
         \hline
$+10\%$&&$1.60 \cdot 10^{-6}$ \\
&$+10\%$&$+\infty$ \\
$+10\%$&$+10\%$&$3.59 \cdot 10^{-6}$ \\
\hline
$+50\%$&&$0.57 \cdot 10^{-6}$ \\
&$+50\%$&$+\infty$ \\
$+50\%$&$+50\%$&$3.59 \cdot 10^{-6}$ \\
\hline
    \end{tabular}
    \caption{Response of the threshold value $\mu'$, as given by \eqref{eq:threshold}, to changes in $r_1$ and $d_1$. We use $+\infty$ to indicate that there is no value for $\mu$ for which tumor survival is guaranteed, which is the case when the population is subcritical. All rates are measured per hour.}
    \label{table:sensitivity2}
\end{table}

\subsection*{Evaluation of combination treatment strategies}

We now discuss various possible changes to the assumptions underlying Figure \ref{fig:interval} in the main text. In this section, we let $\mu_d$ and $\nu_d$ denote the rates of epimutation and reversion when the epigenetic drug is being applied to distinguish them from the rates $\mu$ and $\nu$ off the epigenetic drug.

\subsubsection*{Tumor size and rates of mutation and reprogramming}

In the main text, we assume an initial population size of $n = 10^6 \cdot 0.999$ and $m = 10^6 \cdot 0.001$ cells, an epigenetic reprogramming rate of $\eta = 4 \cdot 10^{-7}$ per hour, and a mutation rate of $\xi_0=\xi_1=10^{-7}$ per cell division. If we assume that the rate of epigenetic reprogramming is increased by an order of magnitude to $\eta = 4 \cdot 10^{-6}$, the relative attractiveness of Schedule B over the other schedules becomes even more pronounced than in the baseline case (Fig \ref{fig:Treatment_appendix1}a).
If we also increase the mutation rate an order of magnitude to $\xi_0=\xi_1=10^{-6}$, it becomes more difficult to eradicate the tumor, but the relative attractiveness of the schedules remains unchanged (Fig \ref{fig:Treatment_appendix1}b).

If we increase the population size one order of magnitude to $n = 10^7 \cdot 0.999$ and $m = 10^7 \cdot 0.001$, assuming the baseline values $\eta = 4 \cdot 10^{-7}$ and $\xi_0=\xi_1=10^{-7}$ for the rates of reprogramming and mutation (figure not shown), the effects will be similar to those shown in Figure \ref{fig:Treatment_appendix1}b, where the population size is as in the baseline case but both rate parameters are increased an order of magnitude.
Thus, the relationship between initial population size and the rate of mutation and epigenetic reprogramming is more important than the absolute value of each in isolation, and any effect of increasing the population size can be negated by a proportional decrease in the these rates.
We also note that if the anti-cancer agent being applied reduces the birth rate $r_0$ of type-0 cells, this will have the same effect as reducing the mutation rate $\xi_0$
(see discussion in Appendix \ref{app:parametrization} above).

For an even larger initial tumor size of $n = 10^8 \cdot 0.999$ and $m = 10^8 \cdot 0.001$, a resistance-conferring mutation will be guaranteed to arise no matter which schedule is applied (figure not shown), although we recover the same behavior as for the $n= 10^7 \cdot 0.999$ and $m = 10^7 \cdot 0.001$ case if the change in size is accompanied by a smaller reprogramming rate, $\eta = 4 \cdot 10^{-8}$, and a smaller mutation rate, $\xi_0=\xi_1=10^{-8}$.
However, even if the initial population size is $n = 10^8 \cdot 0.999$ and $m = 10^8 \cdot 0.001$, with unchanged rates of mutation and reprogramming,
there can still be benefits to applying a combination of an anti-cancer agent and an epigenetic drug, since Schedule B can prevent resistance conferred through epigenetic reprogramming (Fig \ref{fig:Treatment_appendix1}c). In fact, if we assume that the epigenetic drug is stronger than in the main text, so that it increases the reversion rate $\nu$ by three orders of magnitude and decreases the epimutation rate to the same extent, then Schedule B is able to prevent resistance through epigenetic reprogramming with high probability, whereas resistance is effectively guaranteed under any other schedule (Fig \ref{fig:Treatment_appendix1}d).

\begin{figure}
\hspace*{-1cm} \includegraphics[scale=1]{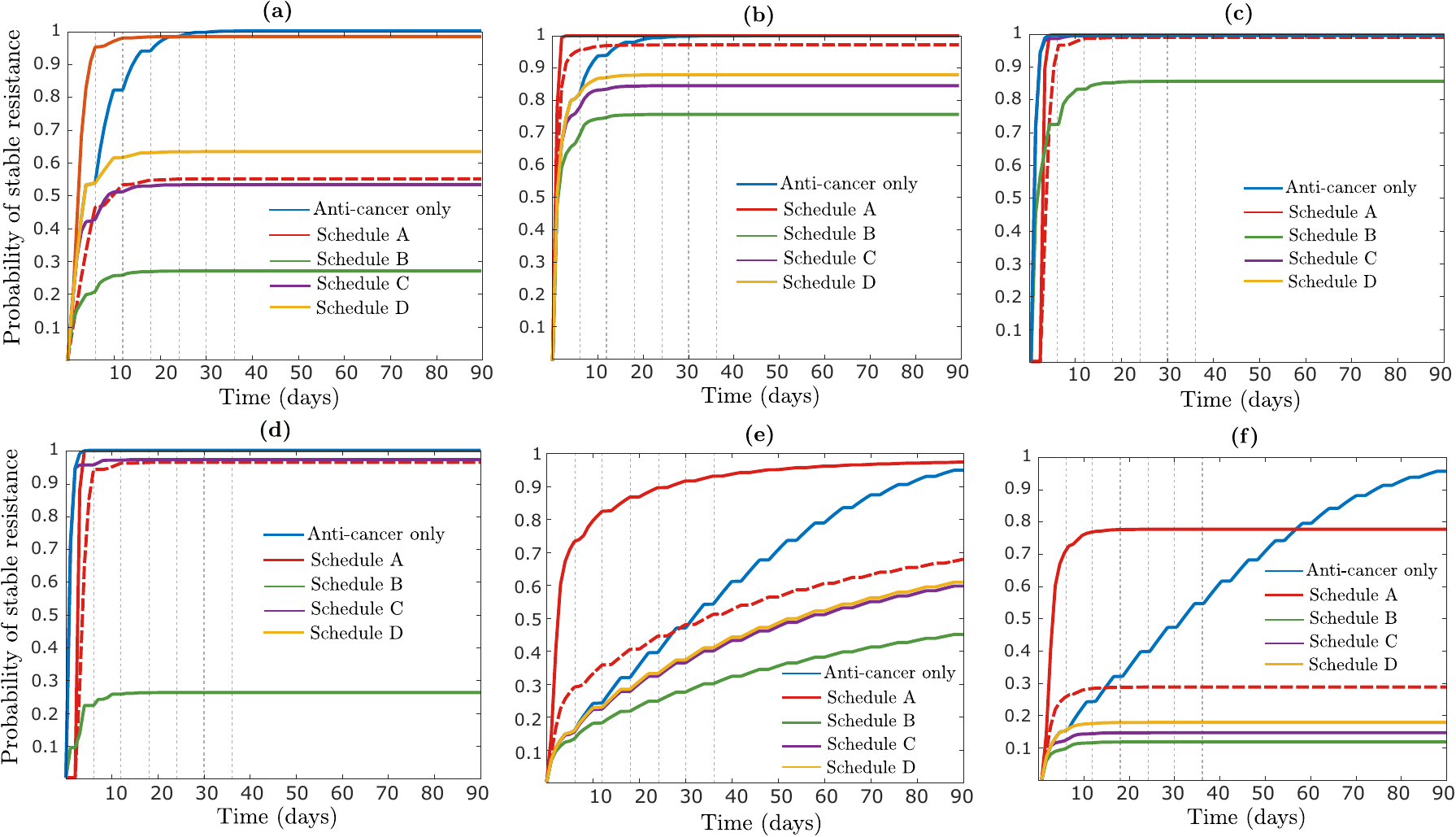}
\centering
\caption{
Time-evolution of the probability of successful type-2 emergence for the treatment schedules examined in Figure \ref{fig:interval} in the main text. Parameters behave as in Figure \ref{fig:interval}
except in (a), $\eta = 4 \cdot 10^{-6}$ (per hour), in (b), $\eta = 4 \cdot 10^{-6}$ and $\xi_0=\xi_1=10^{-6}$ (per cell division), in (c), $n=10^8 \cdot 0.999$, $m = 10^8 \cdot 0.001$ and $\xi_0=\xi_1=0$, in (d), $n=10^8 \cdot 0.999$, $m = 10^8 \cdot 0.001$, $\xi_0=\xi_1=0$, $\mu_d = \mu \cdot 10^{-3}$ and $\nu_d = \nu \cdot 10^3$, in (e), $\mu_d = \mu \cdot 10^{-1}$ and $\nu_d = \nu \cdot 10^1$, and in (f), $\mu_d = \mu \cdot 10^{-3}$ and $\nu_d = \nu \cdot 10^3$.
}
\label{fig:Treatment_appendix1}
\end{figure}

\subsubsection*{Different epigenetic drug action}

In the main text, we assume that the epigenetic drug increases the reversion rate $\nu$ by two orders of magnitude, and decreases the epimutation rate $\mu$ to the same extent.
In Figures \ref{fig:Treatment_appendix1}e and \ref{fig:Treatment_appendix1}f, we show results first assuming that the change in each is one order of magnitude (Fig \ref{fig:Treatment_appendix1}e), and then three orders of magnitude (Fig \ref{fig:Treatment_appendix1}f).
We note that Schedule B continues to be the most attractive schedule, and in Figure \ref{fig:Treatment_appendix1}e, the difference between Schedule B and Schedules C,D becomes more pronounced than for the case shown in the main text.

\subsubsection*{Faster or slower switching dynamics}

In the main text, we assume an epimutation rate of $\mu = 4 \cdot 10^{-5}$ in the absence of the epigenetic drug, in accordance with our baseline parameter regime. In Figures \ref{fig:Treatment_appendix2}a and \ref{fig:Treatment_appendix2}b, we show results assuming $\mu = 4 \cdot 10^{-4}$ (Fig \ref{fig:Treatment_appendix2}a) and $\mu = 4 \cdot 10^{-6}$ (Fig \ref{fig:Treatment_appendix2}b), with the epimutation rate $\mu_d$ in the presence of the epigenetic drug obtained by reducing $\mu$ by two orders of magnitude in each case. As before, the qualitative dynamics remain similar, with the difference between Schedule B and the other schedules even more pronounced in Figure \ref{fig:Treatment_appendix2}a than for the baseline case.

\subsubsection*{Different treatment block sizes}

In the main text, we assume that each treatment block is 2 days.
In Figures \ref{fig:Treatment_appendix2}c and \ref{fig:Treatment_appendix2}d, we show results assuming that each block is 1 day (Fig \ref{fig:Treatment_appendix2}c) or 3 days (Fig \ref{fig:Treatment_appendix2}d), and observe similar results as before.

\subsubsection*{Resistant phenotype is not at a selective disadvantage off drug}
In our baseline regime, we assume that the resistant type-1 phenotype is at a selective disadvantage to drug-sensitive type-0 cells in the absence of the anti-cancer agent. Figure \ref{fig:Treatment_appendix2}e reveals that our results do not depend on this assumption, as if we assume $r_1 = r_0 = 0.04$ and $d_1 = d_0 = 0.0015$ off the anti-cancer agent, Schedule B continues to perform the best.

\begin{figure}
\hspace*{-1cm} \includegraphics[scale=1]{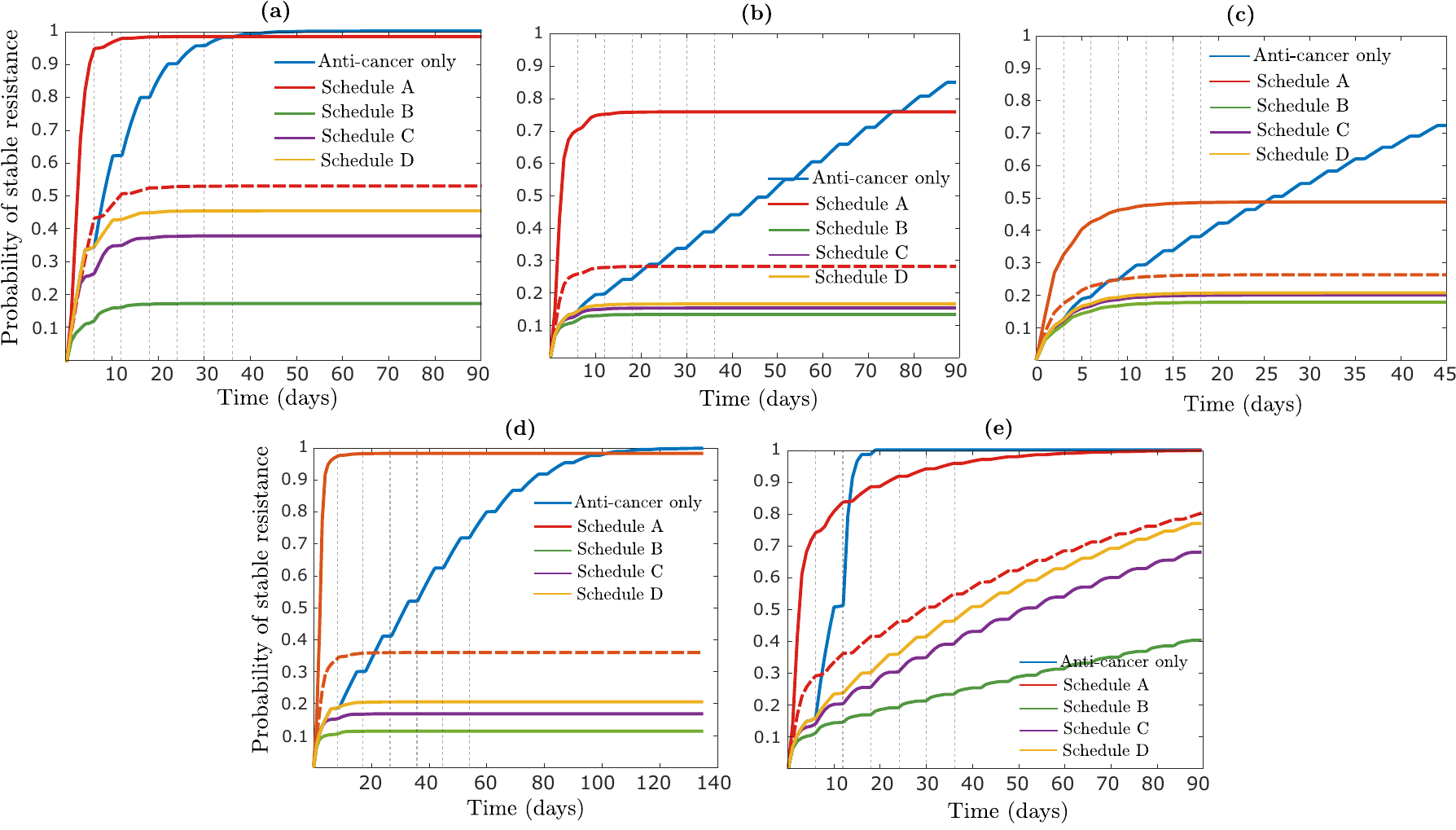}
\centering
\caption{
Time-evolution of the probability of successful type-2 emergence for the treatment schedules examined in Figure \ref{fig:interval} in the main text. Parameters behave as in Figure \ref{fig:interval}
except in (a), $\mu = 4 \cdot 10^{-4}$ in the absence of epigenetic drug, in (b), $\mu = 4 \cdot 10^{-6}$ in the absence of epigenetic drug, in (c), the size of each treatment block is 1 day as opposed to 2 days, in (d), the size of each block is 3 days, and in (e), $r_1 = 0.04$ (per hour) and $d_1 = 0.0015$ in the absence of the anti-cancer agent.
}
\label{fig:Treatment_appendix2}
\end{figure}

\end{appendices}

\end{document}